\newif\iffull 
\makeatletter \@input{texdirectives.tex} \makeatother
\def\@authorsaddresses{}
  \renewcommand{\headrulewidth}{\z@}%
  \renewcommand{\footrulewidth}{\z@}%
  \renewcommand{\headrulewidth}{\z@}%
  \renewcommand{\footrulewidth}{\z@}%
\def\@mkbibcitation{}
\renewcommand\footnotetextcopyrightpermission[1]{}
\definecolor{shadecolor}{gray}{0.93}
\definecolor{darkblue}{rgb}{0.0, 0.0, 0.55}
\definecolor{darkgreen}{rgb}{0.0, 0.55, 0.13}
\definecolor{darkred}{rgb}{0.55, 0.0, 0.0}
\definecolor{darkviolet}{rgb}{0.58, 0.0, 0.83}
\definecolor{lightblue}{rgb}{0.68, 0.85, 0.9}
\newcommand{\xMapsto}[2][]{\ext@arrow 0599{\Mapstofill@}{#1}{#2}}
\def\Mapstofill@{\arrowfill@{\Mapstochar\Relbar}\Relbar\Rightarrow}
\definecolor{propcolor}{HTML}{3F7D31}
\definecolor{mypurple}{HTML}{5B069D}
\definecolor{lightgray}{gray}{0.90}
\reservestyle{\oblang}{\mathsf}
\newcommand{\code}[1]{\texttt{#1}}
\NewDocumentCommand{\List}{g}{\code{List}\IfValueT{#1}{~#1}}
\newcommand{\oblset}[1]{\textsc{#1}}
\newcommand{\Type}{\mathtt{Type}}
\newcommand{\Prop}{\oblset{Prop}}
\newcommand{\lx}{\ell} % label
\newcommand{\iappev}[3][{\evl[i]}]{\mathbin{@^{#2}_{#1}}}
\newcommand{\ie}{\emph{i.e.,}\xspace}
\newcommand{\etal}{\emph{et al.}\xspace}
\newcommand{\eg}{\emph{e.g.,}\xspace}
\newcommand{\ev}{\varepsilon}
\newcommand{\evl}[1][]{{\color{mypurple}{\ev_{\lx #1}}}}
\newcommand{\assignev}[3][{\evl[i]}]{#2 \overset{#1}{:=} #3}
\newcommand{\opt}[1]{}
\newmdenv[topline=false,bottomline=false,leftline=false]{borderright}
\newcommand\gCtx{\Xi}
\newcommand{\trans}[1]{\ensuremath{[#1]}}
\newcommand{\param}[2]{\transType{#1}_p^{#2}}
\newcommand{\uparam}[1]{\trans{#1}_u}
\newcommand{\transType}[1]{\ensuremath{[\![#1]\!]}}
\newcommand{\uparamT}[1]{\transType{#1}_u}
\newcommand{\uparamWB}[2]{\trans{#1}_u^{#2}}
\newcommand{\uparamTWB}[2]{\transType{#1}_u^{#2}}
\newcommand{\uparamEq}[1]{\transType{#1}_u^{eq}}
\newcommand{\uparamCoh}[1]{\transType{#1}_u^{coh}}
\newcommand\filledcirc{\scaleobj{0.68}{\otimes}}
\newcommand{\primeTrans}[1]{{#1}'}
\newcommand{\epsTrans}[1]{{#1}^{\varepsilon}}
\newcommand{\fstTrans}[1]{{#1}^{\circ}}
\newcommand{\sndTrans}[1]{{#1}^{\bullet}}
\newcommand{\RTrans}[1]{{#1}^{\filledcirc}}
\newcommand{\contextFst}[1]{|\! \! #1 \!\!|_{\circ}}
\newcommand{\contextSnd}[1]{|\!\! #1 \!\!|_{\bullet}}
\newcommand{\contextR}[1]{|\!\! #1 \!\!|_\varepsilon}
\newcommand{\contextAll}[1]{|\! \! #1 \!\!|}
\newcommand{\primeTransExt}[2]{[#1]_{_\square}^{#2}}
\newcommand{\defeq}{\triangleq}
\newcommand{\Nat}{\mathbb{N}}
\newcommand{\Bin}{\mathtt{Bin}}
\newcommand\NO{\mathtt{O}_\Bin}
\newcommand\NS{\mathtt{S}_\Bin}
\newcommand\RS{\RTrans{\mathtt{S}}}
\newcommand\RO{\RTrans{\mathtt{O}}}
\newcommand\Succ{\mathtt{S}}
\newcommand\Zero{\mathtt{O}}
\newcommand{\proji}[1]{#1.1}
\newcommand{\projii}[1]{#1.2}
\newcommand{\projiii}[1]{#1.3}
\newcommand{\Eqr}{\simeq}
\newcommand{\transEq}[2]{#1 \mathop{\#} #2}
\newcommand{\Eq}[2]{#1 \Eqr #2}
\newcommand{\heq}[2]{#1 \approx #2}
\newcommand{\heteqr}[2]{#1 \bowtie #2}
\newcommand{\heteqrP}[2]{#1 \bowtie_p #2}
\newcommand{\heteq}[4]{\heq{#1}{#2} : \heteqr{#3}{#4}}
\newcommand{\heteqP}[4]{\heqP{#1}{#2} : \heteqrP{#3}{#4} }
\newcommand{\heqP}[2]{#1 \approx_p #2}
\newcommand{\Univ}[1]{\RTrans{#1}}
\newcommand{\univTerm}[1]{\textsf{univ}_{#1}}
\newcommand{\equivPi}{\mathrm{Equiv}_\Pi}
\newcommand{\CC}{${\mathrm{CC}}_{\omega }$\xspace}
\newcommand{\Coq}{${\mathrm{Coq}}$\xspace}
\newcommand{\OCaml}{${\mathrm{OCaml}}$\xspace}
\newcommand{\CIC}{${\mathrm{CIC}}$\xspace}
\newcommand{\CICU}{${\mathrm{CIC}}_u$\xspace}
\newcommand{\vdashu}{\vdash_u}
\newcommand{\id}[1]{\text{id}_{#1}}
\newcommand{\Transp}[1]{\mathrel{\uparrow\!\!#1}}
\newcommand{\Transpe}[2]{\mathrel{\uparrow_{\!#1}\!{#2}}}
\newcommand{\typeOfI}{\mathtt{T}}
\newcommand{\typeOfci}{\mathtt{T_{c_i}}}
\newcommand{\typeOfrect}{\mathtt{T_{rect}}}
\newcommand{\UR}[1]{U\!R_{#1}}
\newcommand{\ur}[1]{r_{#1}}
\keywords{Type Equivalence, Univalence, Parametricity, Proof Assistants, Coq}
\begin{document}

\renewcommand{\footnotemark}{\mbox{}}

\title{The Marriage of Univalence and Parametricity}
\subtitle{To appear in Journal of the ACM (accepted Oct. 2020)}

\author{Nicolas Tabareau}
\affiliation{%
  \institution{Gallinette Project-Team, Inria}
  \city{Nantes}
  \country{France}
}
\author{\'Eric Tanter}
\affiliation{%
  \institution{Computer Science Department (DCC), University of Chile}
  \city{Santiago}
  \country{Chile}
}
\author{Matthieu Sozeau}
\affiliation{%
  \institution{Gallinette Project-Team, Inria}
  \city{Nantes}
  \country{France}
}
\titlenote{This work is partially funded by ANID FONDECYT Regular
  Project 1190058, ANID/CONICYT REDES Project 170067, ERC Starting Grant
  CoqHoTT 637339 and Inria {\'E}quipe Associ{\'e}e GECO.}

\begin{abstract}
Reasoning modulo equivalences is natural for everyone, including mathematicians. Unfortunately, in proof assistants based on type theory, which are frequently used to mechanize mathematical results and carry out program verification efforts, equality is appallingly syntactic and, as a result, exploiting equivalences is cumbersome at best. Parametricity and univalence are two major concepts that have been explored in the literature to transport programs and proofs across type equivalences, but they fall short of achieving seamless, automatic transport. This work first clarifies the limitations of these two concepts when considered in isolation, and then devises a fruitful marriage between both. The resulting concept, called {\em univalent parametricity}, is an extension of parametricity strengthened with univalence that fully realizes programming and proving modulo equivalences. Our approach handles both type and term dependency, as well as type-level computation. 
In addition to the theory of univalent parametricity, 
we present a lightweight framework implemented in the Coq proof assistant that allows the user to transparently transfer definitions and theorems for a type to an equivalent one, as if they were equal. For instance, this makes it possible to conveniently switch between an easy-to-reason-about representation and a computationally-efficient representation, as soon as they are proven equivalent. The combination of parametricity and univalence supports {\em transport {\`a} la carte}: basic univalent transport, which stems from a type equivalence, can be complemented with additional proofs of equivalences between functions over these types, in order to be able to transport more programs and proofs, as well as to yield more efficient terms.
We illustrate the use of univalent parametricity on several examples, including a recent integration of native integers in \Coq. This work paves the way to easier-to-use proof assistants by supporting seamless programming and proving modulo equivalences.
\end{abstract}

\maketitle

\section{Introduction}
\label{sec:intro}

If mathematics is the art of giving the same name to different things, programming is the art of computing the same thing with different means. That sameness notion ought to be equivalence.
Unfortunately, in programming languages as well as proof assistants such as \Coq~\cite{Coq:manual} and Agda~\cite{agdaPaper}, the
notion of sameness or equality is appallingly syntactic. In
dependently-typed languages that also serve as proof assistants,
equivalences can be stated and manually exploited, but they cannot be
used as transparently and conveniently as syntactic or propositional equality.
The benefits we ought to get from having equivalence as the primary
notion of sameness include the possibility to state and prove results
about a data structure (or mathematical object) that is convenient to
formally reason about, and then automatically transport these results to
other structures, for instance ones that are computationally more
efficient, albeit less convenient to reason about.

Let us consider two equivalent representations of natural numbers available in the \Coq proof assistant (Figure~\ref{fig:nat-binnat}): Peano natural numbers \coqe{nat}, with constructors \coqe{O} and \coqe{S}, and binary natural numbers \coqe{N}, which denote a sequence of bits with a leading 1. Defining functions over \coqe{nat} and reasoning about them is simple. For instance, 
\coqe{+_nat : nat -> nat -> nat} is a simple induction on the first argument, and proving that addition is commutative is similarly direct. Conversely, addition on binary natural numbers \coqe{+_N : N -> N -> N} is defined with three functions---two mutually-recursive functions on \coqe{positive} and a simple function---making most reasoning much more involved. The other side of the comparison is that computing with \coqe{nat} is much less efficient (if at all possible!) than computing with \coqe{N}.
Ideally, one would want to apply easy inductive reasoning on \coqe{nat} to establish properties of efficient functions defined on \coqe{N}.

\begin{figure}[t]
  \hspace{-2em}
  \begin{tabular}{ccc}
    \begin{coq}
          Inductive nat : Set :=
           | O : nat
           | S : nat -> nat
           
         \end{coq}
      &
    \begin{coq}
      Inductive N : Set :=
       | NO : N
       | Npos : positive -> N
       
    \end{coq}
    &
      \begin{coq}
      Inductive positive : Set :=
       | xI : positive -> positive
       | xO : positive -> positive
       | xH : positive
     \end{coq}
       \end{tabular}
  \caption{Definition of \coqe{nat} and \coqe{N} in \Coq}
  \label{fig:nat-binnat}
\end{figure}

\paragraph{The Challenge of Automatic Transport} An equivalence between \coqe{nat} and \coqe{N} consists of two transport functions \coqe{bupa  : N -> nat} and \coqe{bdna  : nat -> N} together with the proof that they are inverse of each other. Manually exploiting such an equivalence in order to transport properties on \coqe{nat} to properties on \coqe{N} is however challenging, even for simple properties. 

Consider the commutativity of addition on \coqe{nat}:

\begin{shaded}
\begin{coq}
Definition +_nat_comm : forall (n m : nat), n +_nat m = m +_nat n. (* simple inductive proof *)
\end{coq}
\end{shaded}

from which one would like to deduce the commutativity of addition on \coqe{N}:
\begin{shaded}
\begin{coq}
Definition +_N_comm : forall (n m : N). n +_N m = m +_N n.
\end{coq}
\end{shaded}

A proof of \coqe{+_N_comm} that exploits \coqe{+_nat_comm} and the \coqe{nat}-\coqe{N} equivalence would proceed as follows:
\begin{shaded}
\begin{coq}
1. (bupa n) +_nat (bupa m) = (bupa m) +_nat (bupa n)  (* by +_nat_comm (bupa n) (bupa m) *)
2. bdna ((bupa n) +_nat (bupa m)) = bdna ((bupa m) +_nat (bupa n)) (* by congruence *)
3. (bdna bupa n) +_N (bdna bupa m) = (bdna bupa m) +_N (bdna bupa n) (* bdna is a monoid homomorphism *)
4. n +_N m = m +_N n (* by equivalence *)
\end{coq}\end{shaded}
Observe how one is forced to explicitly rewrite and reason about
transports at each step. In particular, step 3 requires to show that
\coqe{bdna} is a monoid homomorphism between \coqe{+_nat} and \coqe{+_N}.
This does not follow from the type
equivalence between \coqe{nat} and \coqe{N}, and therefore needs to be
manually proven.

From this simple example, it is easy to imagine the difficulty of transporting entire libraries of structures and lemmas about their properties. 
The promise of {\em automatic transport across type equivalences} is to seamlessly allow users to operate with the most-suited representation as needed. In particular, deriving \coqe{+_N_comm} ought to be as simple as transporting \coqe{+_nat_comm} (using a general transport operator \coqe{upa} whose source and target types are inferred from context):
\begin{shaded}
\begin{coq}
Definition +_N_comm : forall (n m : N). n +_N m = m +_N n := upa +_nat_comm.
\end{coq}
\end{shaded}
In the literature, two major concepts have been explored to achieve automatic transport across equivalences: {\em parametricity} and {\em univalence}. This article demonstrates that both of them are insufficient taken in isolation, and that it is possible to devise a marriage of univalence and parametricity that leverages both in order to fully realize programming and proving modulo equivalences. 

\paragraph{Parametricity.}

Since the seminal work of~\citet{magaudBertot:types2000} on translating proofs between different representations of natural numbers in \Coq, there has been a lot of work in this direction, motivated by both program verification and mechanized mathematics, with several libraries available for either Isabelle/HOL~\cite{huffmanKuncar:cpp2013} or \Coq~\cite{cohenAl:cpp2013,zimmermannHerbelin:arxiv2015}.
At their core, these approaches build on the notion of parametricity~\cite{Reynolds83} and its potential for free theorems about observational equivalences~\cite{Wadler89}, in order to obtain results such as data refinements for free~\cite{cohenAl:cpp2013} and proofs for free~\cite{bernardyAl:jfp2012}. 

Such a parametric transport is essentially a {\em white-box} approach that structurally rewrites observationally-equivalent terms. 
The previous example of \coqe{N_comm} can be handled by parametric transport.  However, as we will demonstrate, the approach does not fully apply in the dependently-typed setting where computation at the type level is essential. (The \coqe{N_comm} example luckily does not rely on any type-level computation.)

\paragraph{Univalence}
Univalence~\cite{voevodsky:cmu2010} is a novel principle for mathematics and type theory that postulates that equivalence is equivalent to equality.
Leaving aside the most profound mathematical implications of Homotopy Type Theory (HoTT) and univalence~\cite{hottbook}, this principle should fulfill the promise of automatic transport of programs, theorems, and proofs across equivalences. 
There are currently two major approaches to realize univalence in a type theory. In Martin-L{\"o}f Type Theory (MLTT)~\cite{MARTINLOF197573}, and related theories such as the Calculus of (Inductive) Constructions~\cite{coquandHuet:ic1988,Paulin15}, univalence can only be expressed as an {\em axiom}. However, by the Curry-Howard correspondence, axioms have no computational content, since they correspond to free variables. Therefore an axiomatic general univalent transport is not effective. In concrete terms, this means that using axiomatic univalent transport will yield a ``stuck term'', stuck at the use of the axiom.
There are several recent developments to build a dependent type theory with a computational account of univalence, most notably cubical type theories~\cite{cubicaltt,altenkirch15:towards,angiuliAl:csl2018}, and concrete implementations such as Cubical Agda~\cite{vezzosiAl:icfp2019} have started to appear. Cubical type theories fully achieve the consequences of realizing univalence, such as giving computational content to both functional and propositional extensionality.

Irrespective of how univalence is realized, 
{\em univalent transport} allows exploiting an equivalence between two types $A$ and $B$ in order to establish an equivalence between $P\;A$ and $P\;B$, for any arbitrary predicate $P$. But univalent transport alone does not address a major challenge for automatic transport, namely that of inferring, from basic equivalences, the common predicate $P$ out of arbitrarily complex dependent types. Additionally, while it is universally applicable as a {\em black-box} approach, univalent transport can yield unsatisfactory transported terms, as explained next.

\paragraph{Transport {\`a} la Carte}

With univalent transport, one can always convert any development that uses \coqe{nat} into one that uses \coqe{N}, both in computationally-relevant parts and in parts that deal with reasoning and formal properties. However, univalent transport does not necessarily reconcile ease of reasoning with efficient computation. 
Indeed, univalently transporting a function \coqe{nat -> nat} to a function \coqe{N -> N} yields a function that first converts its binary argument to a natural number, performs the original (slow) computation and finally converts the result back to a binary number. Dually, if one starts from a \coqe{N -> N} function and transports it to \coqe{nat -> nat}, the resulting function will still execute efficiently, but simple \coqe{nat}-based inductive reasoning will not be applicable to it.

The problem is that univalent transport across the \coqe{nat}-\coqe{N} equivalence does not magically exploit the correspondence between different {\em implementations} of functions that operate on these types, 
such as between \coqe{+_nat} and \coqe{+_N}. Such term-level correspondences {\em are} exploited in parametricity-based approaches, and require additional proof and engineering effort.

Therefore, in addition to addressing the limitations of parametricity and univalence when taken in isolation, an essential component of automatic transport is the tradeoff between the cost of manually establishing equivalences (between both types and functions that operate on them), and the ease of automatic univalent transport. One wishes for an automatic transport mechanism {\em {\`a} la carte}, which exploits user-provided equivalences between terms when available, and falls back to univalent transport otherwise.

\paragraph{Univalent Parametricity} This article proposes a marriage of univalence and parametricity that enables automatic transport {\`a la carte} across equivalences. It deeply connects and intertwines (white-box) parametric transport and (black-box) univalent transport in a fruitful manner.
Essentially, univalent parametricity is a strengthening of the
parametricity translation for dependent types that demands the
relation on the universe to be compatible with equivalences. 
This paper is structured as follows. We first recall parametricity in type theory and present its limitations to transport definitions across equivalences (\S\ref{sec:param-not-enough}). We then proceed similarly with univalence, highlighting the complementarity between both approaches (\S\ref{sec:univ-not-enough}). Next, we illustrate how univalent parametricity achieves seamless automatic transport across equivalences from a user point of view (\S\ref{sec:univ-param-acti}). We develop the theory of univalent parametricity for the Calculus of Constructions with universes \CC (\S\ref{sec:univparam}), and for the Calculus of Inductive Constructions CIC (\S\ref{sec:inductives}). We present the realization of univalent parametricity in the \Coq proof assistant as a shallow embedding that exploits the typeclass mechanism (\S\ref{sec:coq}). Note that this implementation in \Coq does not give any computational content to univalence and extensionality axioms; instead, it brings automatic univalent transport to programmers, using such axioms sparingly in order to disrupt computation as little as possible. We explain how the illustration of \S\ref{sec:univ-param-acti} is effectively implemented in \Coq (\S\ref{sec:extended-example}). We end by describing a case study related to the recent integration of native integers in \Coq (\S\ref{sec:ffi}). \S\ref{sec:related} discusses related work, and \S\ref{sec:conclu} concludes.

The complete \Coq development is available online:
\begin{center}
\url{https://github.com/coqhott/univalent_parametricity}
\end{center}

\noindent {\bf Prior publication.}
This paper is a substantial extension of a prior conference publication~\cite{tabareauAl:icfp2018}. First, we explain in details the limitations of both parametricity and univalence when considered in isolation (\S\ref{sec:param-not-enough}-\S\ref{sec:univ-not-enough}). Second, we extend our original proposal to integrate user-defined correspondences between terms of different (related) types, which is absolutely necessary to reconcile ease of reasoning and efficient computation, and to realize transport {\`a} la carte. The illustrations of \S\ref{sec:univ-param-acti} and \S\ref{sec:extended-example} are therefore novel as well, as they take advantage of this new feature, in addition to providing a detailed user perspective on the \Coq framework. \S\ref{sec:coq} is extended accordingly to deal with transport {\`a} la carte. \S\ref{sec:univparam} clarifies the two reasoning principles, white-box and black-box, supported by univalent parametricity, and explains how they support transport {\`a} la carte. Finally, the case study of reasoning about/with native integers (\S\ref{sec:ffi}) is entirely new.

\section{Parametricity is Not Enough}
\label{sec:param-not-enough}

We first review the development of parametricity in dependently-typed theories (\S\ref{sec:param-as-logic}), and discuss its use and limitations for transporting some programs and proofs (\S\ref{sec:param-lift}). Finally, we consider an extension of parametricity that addresses some limitation, but is still limited when type-level computation is involved (\S\ref{sec:het-param}).

\subsection{Parametricity for Dependent Types}
\label{sec:param-as-logic}

\begin{figure}[t]

\begin{align*}
\param{\Type_i}{} \ A \ B & \defeq  A -> B -> \Type_i \\[0em]
\param{\Pi a:A. B}{} \ f \ g & \defeq  \Pi (a:A) (\primeTrans{a}:\primeTrans{A}) (\epsTrans{a}:\param{A}{} \ a \
\primeTrans{a}). \param{B}{}~ (f\ a) ~ (g\ \primeTrans{a})\\[0em]
\param{x}{}& \defeq  \epsTrans{x} \\[0em]
\param{\lambda x:A. t}{} & \defeq  \lambda (x:A) (\primeTrans{x}:\primeTrans{A}) (\epsTrans{x}:\param{A}{}\ x\  
  \primeTrans{x}). \param{t}{}\\[0em]
\param{t \  u}{} & \defeq \param{t}{} \ u \ \primeTrans{u} \ \param{u}{} \\[1em]
\param{\cdot}{} & \defeq  \cdot \\
\param{\Gamma,x:A}{} & \defeq \param{\Gamma}{},
  x:A, \primeTrans{x}:\primeTrans{A}, \epsTrans{x} : \param{A}{} \ x \ \primeTrans{x}
\end{align*}
\caption{Parametricity translation for \CC (from~\cite{bernardyAl:jfp2012})}
\label{fig:param}
\end{figure}

Reynolds originally formulated the relational interpretation of
types to establish parametricity of System~F~\cite{Reynolds83}. 
More recently, 
\citet{bernardyAl:jfp2012} generalized the approach to pure type
systems, including the Calculus of
Constructions with universes \CC, and its extension with inductive
types, the Calculus of Inductive Constructions CIC, which is at the core
of proof assistants like \Coq.\footnote{\CC features a predicative
  hierarchy of universes ${\Type }_{i}$, and also an impredicative universe
  $\Prop$. In this paper, we focus on the predicative hierarchy, because adding an impredicative universe has little impact. Section~\ref{sec:prop} explains the minor changes for integrating the impredicative universe $\Prop$.}

The syntax of \CC includes a hierarchy of universes $\Type_i$, variables, applications, lambda expressions and dependent function types:
\begin{displaymath}
      {A, B, M, N}\mathrel{::=}{{\mbox{${\Type }_{i}$}}\mid {\mbox{$x$}}\mid {\mbox{$M\ N$}}\mid {\mbox{$\lambda x : A.\,M$}}\mid {\mbox{$\Pi x : A.\,B$}}}
\end{displaymath}
Its typing rules are standard, and hence omitted here---see \citet{Paulin15} for a recent presentation.

Parametricity for \CC can be defined as a logical relation
$\param{A}{}$ for every type $A$. Specifically, $\param{A}{}\ a_1\ a_2$ states that $a_1$ and $a_2$ are related at type $A$. 
The essence of \citeauthor{bernardyAl:jfp2012}'s approach is to express parametricity as a translation from terms to the expression of their relatedness {\em within} the same theory; indeed, the expressiveness of \CC allows the logical relation to be stated in \CC itself. Note that because terms and types live in the
same world, $\param{-}{}$ is defined for every term. 

Figure~\ref{fig:param} presents the definition of $\param{-}{}$ for \CC,
based on the work of \citet{bernardyAl:jfp2012}.
For the universe $\Type_i$, the translation is naturally defined as (arbitrary)
binary relations on types. 
For the dependent function type $\Pi a:A. B$, the translation specifies that related inputs at $A$, as witnessed by $e$, yield related outputs at $B$. 

Following \citeauthor{bernardyAl:jfp2012}, the prime notation (\eg~$\primeTrans{A}$) denotes duplication with renaming, where each free variable $x$ is replaced with $\primeTrans{x}$. 
Similarly, the translation of a lambda term $\lambda x:A. t$ is a function that takes two arguments and a witness $\epsTrans{x}$ that they are related; a variable $x$ is translated to $\epsTrans{x}$; a translated application passes the original argument, its renamed duplicate, along with its translation, which denotes the witness of its self-relatedness. The translation of type environments follows the same augmentation pattern, with duplication-renaming of each variable as well as the addition of the relational witness $\epsTrans{x}$.

Armed with this translation, it is possible to prove an abstraction theorem à la Reynolds, saying that a well-typed term is related to itself (more precisely, to its duplicated-renamed self):

\begin{theorem}[Abstraction theorem]
  \label{thm:abstraction-param}
  If $\Gamma \vdash t :A$ then 
  $\param{\Gamma}{} \vdash \param{t}{} : \param{A}{}\ t\ \primeTrans{t}$.
\end{theorem}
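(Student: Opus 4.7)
The plan is to proceed by induction on the typing derivation $\Gamma \vdash t : A$, treating each typing rule of \CC in turn. Before the main induction, I would establish two auxiliary properties that the translation must satisfy.

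First, a \emph{context well-formedness} property: if $\Gamma$ is well-formed, then so is $\param{\Gamma}{}$, and for every $x : A$ occurring in $\Gamma$ we have $\param{\Gamma}{} \vdash x : A$, $\param{\Gamma}{} \vdash \primeTrans{x} : \primeTrans{A}$, and $\param{\Gamma}{} \vdash \epsTrans{x} : \param{A}{}\ x\ \primeTrans{x}$, directly by unfolding the definition of $\param{\Gamma}{}$. Second, a \emph{substitution lemma} showing that the translation commutes with substitution, i.e. $\param{t[u/x]}{} = \param{t}{}[u/x,\; \primeTrans{u}/\primeTrans{x},\; \param{u}{}/\epsTrans{x}]$, together with the analogous facts for the primed duplication $\primeTrans{(\cdot)}$. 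Both lemmas are proved by straightforward induction on the structure of terms.

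The main induction then proceeds case by case. For the variable rule, $\param{x}{} = \epsTrans{x}$ and the required typing is immediate from the context well-formedness lemma. For the universe rule $\Gamma \vdash \Type_i : \Type_{i+1}$, we must inhabit $\param{\Type_{i+1}}{}\ \Type_i\ \Type_i = \Type_i \to \Type_i \to \Type_{i+1}$, which is exactly the type of $\param{\Type_i}{} \defeq \lambda A\ B.\, A \to B \to \Type_i$. For the dependent product $\Pi x:A.B$, applying the induction hypothesis to $A$ and $B$ yields witnesses of the correct relational types, from which one assembles the translation of the product. For the abstraction $\lambda x:A.t$, the induction hypothesis on $t$ gives a term of type $\param{B}{}\ (t)\ (\primeTrans{t})$ in the extended context, which is exactly what is needed. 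For the application $t\ u$, the induction hypothesis on $t$ provides a function of type $\Pi x\, \primeTrans{x}\, \epsTrans{x}.\, \param{B}{}\ (t\ x)\ (\primeTrans{t}\ \primeTrans{x})$; applying it to $u$, $\primeTrans{u}$, and $\param{u}{}$ (the latter supplied by the induction hypothesis on $u$) yields $\param{B}{}\ (t\ u)\ (\primeTrans{t}\ \primeTrans{u})$, and the substitution lemma identifies this with $\param{B[u/x]}{}\ (t\ u)\ (\primeTrans{t}\ \primeTrans{u})$, as required.

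The delicate case, and the one I expect to be the main obstacle, is the \emph{conversion rule}: if $\Gamma \vdash t : A$ and $A \equiv A'$, then $\Gamma \vdash t : A'$. To transport the induction hypothesis along this step we need that the translation respects definitional equality, that is, $A \equiv A'$ implies $\param{A}{}\ t\ \primeTrans{t} \equiv \param{A'}{}\ t\ \primeTrans{t}$. This reduces to showing that $\param{-}{}$ preserves $\beta$-reduction (and $\eta$, if present), which in turn is a direct corollary of the substitution lemma applied to the redex $(\lambda x:A.\,b)\ u$: one verifies that $\param{(\lambda x:A.\,b)\ u}{}$ $\beta$-reduces in three parallel steps to $\param{b[u/x]}{}$. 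Once this compatibility is in place, the conversion case is immediate, and the induction closes uniformly across all the rules of \CC.
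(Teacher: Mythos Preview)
Your proposal is correct and follows essentially the same approach as the paper: the paper states only that the theorem is ``proven by showing the fundamental property of the logical relation for each constructor of the theory'' and then singles out the universe case $\vdash \Type_i : \Type_{i+1}$, exhibiting $\lambda (A\ B : \Type_i).\ \Type_i$ as the required inhabitant of $\Type_i \to \Type_i \to \Type_{i+1}$. Your case analysis, substitution lemma, and treatment of conversion are the standard details behind this sketch (the result is attributed to \citet{bernardyAl:jfp2012}), so there is nothing to correct.
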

In particular, this means that the translation of a term  $\param{t}{}$ is itself the {\em proof} that $t$ is relationally parametric.

The abstraction theorem is proven by showing the fundamental
property of the logical relation for each constructor of the theory.
In particular, for the cumulative hierarchy of universes,
$\vdash \Type_i : \Type_{i+1}$. This means that we have a kind of
fixpoint property for the relation on $\Type_i$:
$$ 
\vdash \param{\Type_i}{} : \param{\Type_{i+1}}{} \ \Type_i \ \Type_i.
$$
For parametricity, this property holds because the following is a proof term:
$$ 
\lambda (A \  B : \Type_i) . \  \Type_i : \Type_i -> \Type_i -> \Type_{i+1}.
$$
Note that this necessary fixpoint property is not necessarily trivial to satisfy in any variant of parametricity, as we will see later (\S\ref{sec:univparam}).

\subsection{Using Parametricity to Transport Programs and Proofs}
\label{sec:param-lift}

The parametricity translation together with the abstraction theorem (Theorem~\ref{thm:abstraction-param}) are
powerful tools to derive free theorems (and proofs)~\cite{bernardyAl:jfp2012}. 
However, the abstraction theorem is only concerned with what we can
call {\em reflexive homogeneous} instances of the logical relation, \ie~relating a term with itself (\ie~reflexive) and hence at the same type (\ie~homogeneous). 
Thus, in order to be able to relate functions and theorems over
different types, such as \coqe{nat} and \coqe{N}, the standard
solution is to define functions manipulating a common abstraction of their
algebraic structure---in the case of natural numbers, a type with a zero and a
successor function---together with an elimination principle.
Then, by parametricity, we know that such a function defined on the common abstraction behaves the same if
we instantiate it with \coqe{nat} or \coqe{N}, because it must
preserve any relation between \coqe{nat} and \coqe{N}, in particular
equivalences. 
This is for instance the approach taken in the CoqEAL framework~\cite{cohenAl:cpp2013}.

Parametricity presents two important issues, which we call the {\em anticipation problem} and the
{\em computation problem}. The anticipation problem is that, in order to reap the benefits of parametricity to transport programs and proofs, one must anticipate and explicitly exhibit {\em a priori} an interface that is common to the types dealt
with, and to define all functions generically in terms of this common interface. Engineering-wise, this anticipation might be problematic. Furthermore, defining the right interface might be challenging.
Of course, in the case of the addition on natural
numbers\footnote{This function definition corresponds to the infix
  notation \coqe{+_nat} used in \S\ref{sec:intro}.}, it is fairly
straightforward to define the common interface, because the definition
of \coqe{plus} is only using the successor function and the
eliminator.
\begin{shaded}
\begin{coq}
  Definition plus (n m : nat) : nat := nat_rect (fun _ => nat) m (fun _ res => S res) n.
\end{coq}
\end{shaded}
Similarly, as already noticed by \citet{cohenAl:cpp2013}, finding the right interface is direct when dealing with
primitive inductive types, but it becomes quite challenging when
dealing with types defined using a combination of several type
constructors.

The computation problem is that  parametricity does not scale to computation at the type level.
To illustrate this, consider the proof that \coqe{O} is different from \coqe{S n}, 
for every natural number \coqe{n}.
\begin{shaded}
\begin{coq}
Definition diff n (e : O = S n) : False :=
  let P_nat := nat_rect (fun _ => Type) (O = O) (fun n _ => False) in
  eq_rect nat O (fun n' _ => P_nat n') (eq_refl O) (S n) e.
\end{coq}
\end{shaded}
This definition uses the elimination principle of equality over a
predicate that is defined by elimination on natural numbers.
It typechecks because in the branch for
\coqe{O}, \coqe{P_nat O} reduces to \coqe{O = O}
and in the branch \coqe{S n, e}, \coqe{P_nat (S n)} reduces to \coqe{False}.

Now, if one tries to generalize this definition of this function by
making it modular with respect to any type of natural numbers with zero,
successor and a constant for the elimination principle, the result is ill-typed,
because the abstraction \coqe{P_abs} of \coqe{P_nat} does not compute
and so \coqe{P_abs O} is not {\em definitionally} equal to
\coqe{O = O}.
This issue can be sidestepped by adding the computational laws of the
eliminator on natural numbers as \emph{propositional} equalities in
the generalized version. But then, one needs to deal with rewriting
explicitly where otherwise everything was handled implicitly by
conversion.\footnote{Note that this issue appears because we are
  working with an intensional type theory. It would not be present in
  an extensional type theory, but in this work we only consider theories with a
  decidable type checking algorithm.}
This rewriting phase is not at all handled by parametricity.

\subsection{Heterogeneous Parametricity Translation}
\label{sec:het-param}

To address the anticipation problem described above, we would like to be able to relate \coqe{nat} or \coqe{N} {\em directly}---\ie~without relying on a common interface that captures their algebraic structure---simply because they are equivalent as types.

Observe that the definition of the parametricity translation of
\citeauthor{bernardyAl:jfp2012} given in Figure~\ref{fig:param} is {\em homogeneous}, in that terms are related {\em at the same type}, 
\ie~$\param{A}{}\ a_1\ a_2$. This allows us to provide instances of the parametricity relation such as $\param{\Type_i}{}$ \coqe{nat} \coqe{N}. 
But once \coqe{nat} and \coqe{N} are related as types, we
will want to relate some of their inhabitants, such as \coqe{O} and
\coqe{NO}, which means we also need to consider {\em heterogeneous}
instances, \ie~over terms of different (related) types.

But actually, parametricity itself is eminently heterogeneous: modifying 
the parametricity translation to reflect this is, in fact, straightforward. 
It suffices to additionally consider a global context~$\gCtx$ of
defined constant triples, where each triple consists of two constants 
(such as \coqe{O} and \coqe{NO})
and a witness that they are parametrically related. 
The global context $\gCtx$ is defined as the following telescope $\gCtx_n$:
\begin{align*}
\gCtx_0 & = \cdot \\
\gCtx_1 & = (\fstTrans{c}_1 : \fstTrans{A}_1 ;\ \sndTrans{c}_1 :\sndTrans{A}_1 ;\ \RTrans{c}_1 : 
\param{A_1}{\gCtx_0}\ \fstTrans{c}_1\ \sndTrans{c_1}) \\
\ldots & \\
\gCtx_n & = \gCtx_{n-1}, 
            (\fstTrans{c}_n : \fstTrans{A}_n ;\ \sndTrans{c}_n : \sndTrans{A}_n ;\ \RTrans{c}_n : \param{A_n}{\gCtx_{n-1}}\ \fstTrans{c}_n\ \sndTrans{c}_n)
\end{align*}

Note that in the definition above, we have extended
the parametricity translation to additionally take the global context into account, 
$\param{\cdot}{\gCtx}$. 
The definition of
Figure~\ref{fig:param} is accordingly extended on constants as follows:
$$\param{\fstTrans{c}}{\gCtx} \defeq \RTrans{c} \text{ when }
(\fstTrans{c} :\_\ ;\ \sndTrans{c}:\_\ ;\ \RTrans{c}: \_) \in \gCtx$$
We note $\contextFst{\gCtx}$, $\contextSnd{\gCtx}$, and $\contextR{\gCtx}$, the typing contexts obtained by projecting the respective components of each triple of $\gCtx$, and $\contextAll{\gCtx}$
the whole typing context $\contextFst{\gCtx}, \contextSnd{\gCtx}, \contextR{\gCtx}$.

To state the fundamental theorem of parametricity in this setting, we
need to define a notion of \emph{white box} translation of a given term,
$\primeTransExt{a}{\gCtx}$, which is essentially the identity function except
for constants in $\gCtx$, which are translated as
$$\primeTransExt{\fstTrans{c}}{\gCtx} \defeq \sndTrans{c} \text{ when }
(\fstTrans{c} :\_\ ;\ \sndTrans{c}:\_\ ;\ \RTrans{c}: \_) \in \gCtx$$
We can now state the ``White Box'' Fundamental Property (FP): 
when a term $a$ is well-typed in the context
$\contextFst{\gCtx}$, its white box translation
$\primeTransExt{a}{\gCtx}$ is also well-typed in the context
$\contextSnd{\gCtx}$, and the parametricity translation of $a$ provides
a witness that $a$ is related
to $\primeTransExt{a}{\gCtx}$.
\begin{corollary}[White Box Fundamental property]
  \label{white box param}
  \label{th:wbfp-intro}
  If $\contextFst{\gCtx}~\vdash a:A$ then
  $\contextSnd{\gCtx}~\vdash \primeTransExt{a}{\gCtx}: \primeTransExt{A}{\gCtx}$ and
  $\contextAll{\gCtx}~\vdash \param{a}{\gCtx} : \param{A}{\gCtx}\ a\ \primeTransExt{a}{\gCtx}$.
\end{corollary}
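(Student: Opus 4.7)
The plan is to derive this corollary from the abstraction theorem (Theorem~\ref{thm:abstraction-param}) by viewing the telescope $\gCtx$ as a particular well-typed context in which the homogeneous translation already does all the work. The key observation is that $\gCtx$ is, by construction, a telescope of the shape $\param{\Gamma}{}$ for the context $\Gamma = \fstTrans{c}_1 : \fstTrans{A}_1, \ldots, \fstTrans{c}_n : \fstTrans{A}_n$, modulo the renaming convention, together with the identification $\primeTrans{(\fstTrans{c}_i)} = \sndTrans{c}_i$ and $\epsTrans{(\fstTrans{c}_i)} = \RTrans{c}_i$. Under this reading, the white-box translation $\primeTransExt{\cdot}{\gCtx}$ is exactly the prime (duplicate-and-rename) operation $\primeTrans{(\cdot)}$ restricted to $\gCtx$ (identity elsewhere), and the extended parametricity translation $\param{\cdot}{\gCtx}$ agrees with the translation of Figure~\ref{fig:param} extended so that $\param{\fstTrans{c}_i}{\gCtx} = \RTrans{c}_i = \epsTrans{(\fstTrans{c}_i)}$.

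First I would verify the well-formedness of $\gCtx$ by induction on $n$: each triple $(\fstTrans{c}_n : \fstTrans{A}_n ;\ \sndTrans{c}_n : \sndTrans{A}_n ;\ \RTrans{c}_n : \param{A_n}{\gCtx_{n-1}}\ \fstTrans{c}_n\ \sndTrans{c}_n)$ is well-typed precisely when $\fstTrans{A}_n$ is well-typed in $\contextFst{\gCtx_{n-1}}$ (so that $\sndTrans{A}_n := \primeTransExt{\fstTrans{A}_n}{\gCtx_{n-1}}$ is well-typed in $\contextSnd{\gCtx_{n-1}}$) and $\param{A_n}{\gCtx_{n-1}}$ has the expected relational type. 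This last point is guaranteed inductively by the very statement we are proving, applied at stage $n-1$ to $\fstTrans{A}_n$.

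Next I would prove the two conclusions by simultaneous induction on the typing derivation $\contextFst{\gCtx} \vdash a : A$. All cases except the constant case are essentially syntactic and mirror the proof of Theorem~\ref{thm:abstraction-param}: universes, variables, $\Pi$-types, $\lambda$-abstractions, and applications all commute with $\primeTransExt{\cdot}{\gCtx}$ in the obvious structural way, and the corresponding clauses of $\param{\cdot}{\gCtx}$ immediately produce the required witness, using the induction hypothesis on sub-derivations. For conversion, one uses that both translations preserve reduction (since they are compositional and $\gCtx$ contains no reducible redexes), hence preserve definitional equality. The only genuinely new case is $a = \fstTrans{c}_i$ for some $i$: here $\primeTransExt{\fstTrans{c}_i}{\gCtx} = \sndTrans{c}_i$ is well-typed in $\contextSnd{\gCtx}$ by definition of $\gCtx$, and $\param{\fstTrans{c}_i}{\gCtx} = \RTrans{c}_i$ has type $\param{A_i}{\gCtx_{i-1}}\ \fstTrans{c}_i\ \sndTrans{c}_i$ by construction, which is exactly $\param{A}{\gCtx}\ a\ \primeTransExt{a}{\gCtx}$ since the translation is stable under context extension.

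The main obstacle I anticipate is the bookkeeping around conversion and substitution: one must check that $\primeTransExt{B[a/x]}{\gCtx} = \primeTransExt{B}{\gCtx}[\primeTransExt{a}{\gCtx}/x]$ and the analogous commutation for $\param{\cdot}{\gCtx}$, so that the application and conversion cases go through. This is a routine but tedious substitution lemma that should be proven once, separately, by induction on the term being translated, using the fact that $\primeTransExt{\cdot}{\gCtx}$ acts only on the closed constants of $\gCtx$ and therefore commutes trivially with substitution of variables. Once this lemma is in place, the corollary follows by essentially rereading the proof of Theorem~\ref{thm:abstraction-param} with the constant clause treated as a base case supplied by $\gCtx$.
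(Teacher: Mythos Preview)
Your key observation is exactly the paper's: the global context $\gCtx$ has the shape of a translated context $\param{\contextFst{\gCtx}}{}$ under the renaming $\primeTrans{(\fstTrans{c}_i)} \leftrightarrow \sndTrans{c}_i$ and $\epsTrans{(\fstTrans{c}_i)} \leftrightarrow \RTrans{c}_i$. Where you diverge is in how you exploit this.

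The paper does \emph{not} redo the induction on typing derivations. Instead, it applies Theorem~\ref{thm:abstraction-param} once to $\contextFst{\gCtx} \vdash a : A$, obtaining $\param{\contextFst{\gCtx}}{} \vdash \param{a}{} : \param{A}{}\ a\ \primeTrans{a}$, and then observes that there is a substitution $\sigma$ from $\contextAll{\gCtx}$ to $\param{\contextFst{\gCtx}}{}$ sending $(\fstTrans{c}, \primeTrans{\fstTrans{c}}, \epsTrans{\fstTrans{c}})$ to $(\fstTrans{c}, \sndTrans{c}, \RTrans{c})$. The result follows immediately from the syntactic identities $\sigma(\param{a}{}) \equiv \param{a}{\gCtx}$ and $\sigma(\primeTrans{a}) \equiv \primeTransExt{a}{\gCtx}$, which are direct from the definitions of the $\gCtx$-indexed translations on constants. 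This is a two-line argument: no case analysis on typing rules, no separate substitution-commutation lemma, no conversion case.

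Your route is correct but reproves Theorem~\ref{thm:abstraction-param} from scratch with an extra base case, and the ``bookkeeping around conversion and substitution'' you anticipate is precisely the work that the paper sidesteps by packaging it into a single application of the already-proven theorem followed by stability of typing under substitution. The two approaches are equivalent in content; the paper's is more economical because it factors through the homogeneous abstraction theorem rather than inlining it.
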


\begin{proof}
  From a (global) context $\contextAll{\gCtx}$, one can construct a
  substitution $\sigma$ from  $\contextAll{\gCtx}$ to
  $\param{\contextFst{\gCtx}}{}$
  by associating the triple
  $(\fstTrans{c} , \sndTrans{c} , \RTrans{c})  \mapsto
  (\fstTrans{c}, \primeTrans{\fstTrans{c}}, 
  \epsTrans{\fstTrans{c}})$. We note $\sigma(t)$ the application of
  the substitution $\sigma$ to a term $t$.
  By the abstraction theorem, we have that
  $$
  \param{\contextFst{\gCtx}}{}~\vdash \param{a}{} :
  \param{A}{}\ a\ \primeTrans{a}.
  $$
  The corollary follows from the fact that
  $ \sigma(\param{a}{}) \equiv \param{a}{\gCtx} $ and 
  $ \sigma(\primeTrans{a}) \equiv \primeTransExt{a}{\gCtx} $.
\end{proof}

Note that this property is only valid in a closed world (\ie no variables
but potentially global constants), because, as parametricity
is not internalized in the theory, there is no witness that variables or
(equivalently) axioms are parametric.

The introduction of the global context $\gCtx$ allows us to provide a direct
heterogeneous extension to the parametricity translation, 
however mentioning it explicitly in 
the translation makes the notation heavy. 
To avoid mentioning the global context explicitly, we introduce the
notation
$$\heteqP{a}{b}{A}{B} \defeq \param{A}{\gCtx}~a~b
$$
which relates two terms $a$ and $b$ at
two related---but potentially different---types $A$ and $B$.
Of course, this notation only makes sense when
$
B = \primeTransExt{A}{\gCtx}
$
in the current global context $\gCtx$, and we implicitly assume it is
the case when we use this notation.
The definition of the parametricity translation of
Figure~\ref{fig:param} for closed terms is the special homogeneous
case, where there is no global context and where terms are related at
the same type, \ie~$\param{A}\ a_1\ a_2$ corresponds to
$\heteqP{a_1}{a_2}{A}{A}$. 

The heterogeneous version of the parametricity translation makes it
possible to relate terms of different types, such as
$\heteqP{\text{\coqe{O}}}{\text{\coqe{NO}}}{\text{\coqe{nat}}}{\text{\coqe{N}}}$,
assuming that
$\heteqP{\text{\coqe{nat}}}{\text{\coqe{N}}}{\Type}{\Type}$ appears in
the global context.
Hereafter, whenever  $\heteqP{a}{b}{A}{B}$ holds, we say that $a$ and $b$ are
{\em parametrically related}. 
With the notation that makes the global context implicit, the
parametricity relation on dependent functions can be expressed as:
$$
\heteqP{f}{g}{\Pi a:A. P~a}{\Pi b:B. Q~b} \defeq \Pi \ (a:A) \
(b:B). \ \heteqP{a}{b}{A}{B} -> \heteqP{f~a}{g~b}{P~a}{Q~b}
$$
assuming $\heteqP{A}{B}{\Type}{\Type}$ and
$\heteqP{P}{Q}{A -> \Type}{B -> \Type}$.

For instance, to show that \coqe{nat} and \coqe{N} are parametrically
related, one needs to provide a relation \coqe{R_nat_N} between
\coqe{nat} and \coqe{N}.
While there are several equivalent ways of defining this relation,
the canonical one reuses the transport function \coqe{bupa}\; 
that comes from the equivalence between \coqe{nat} and \coqe{N}:
\begin{shaded}
\begin{coq}
  Definition R_nat_N := fun n m => n = bupa m.
\end{coq}
\end{shaded}

Then, in order to make explicit that \coqe{N} behaves the same as
\coqe{nat}, one can define a successor function
\coqe{NS : N -> N := fun n => n + 1} and show that \coqe{O} and \coqe{NO} are
parametrically related in the global context which contains
\coqe{(nat,N,R_nat_N)}, as well as \coqe{S} and \coqe{NS}, which
amounts to providing inhabitants for the following types:
$$
\begin{array}{rcl}
  \RO : \heteqP{\Zero}{\NO}{\Nat}{\Bin} & \triangleq & \Zero = \; \text{\coqe{bupa}}\; \NO \\[1em]
  \RS : \heteqP{\Succ}{\NS}{\Nat \rightarrow \Nat}{\Bin \rightarrow \Bin}
                              & \triangleq & \forall \ n \ m, \ n = \;
                                             \text{\coqe{bupa}}\; m \rightarrow
                                             \Succ \ n = \; \text{\coqe{bupa}}\;  (\NS
                                             \ m)  \\

  \end{array}
  $$
Additionally, one also needs to show that \coqe{N} satisfies 
an induction principle corresponding to the induction principle of \coqe{nat}.
Recall that the induction principle of \coqe{nat} has type
\begin{shaded}
\begin{coq}
  Definition nat_rect: forall P : nat -> Type, P O -> (forall n, P n -> P (S n)) -> forall n : nat, P n.
\end{coq}
\end{shaded}
Thus, the corresponding induction principle for \coqe{N} ought to have type
\begin{shaded}
\begin{coq}
  Definition nat_rect_N : forall P : N -> Type, P NO -> (forall n, P n -> P (NS n)) -> forall n : N, P n.
\end{coq}
\end{shaded}
Note that this induction principle is very different from
\coqe{N_rect}, the canonical induction principle derived for the
inductive definition of \coqe{N} (Figure~\ref{fig:nat-binnat}). 
Finally, we also need to prove that \coqe{nat_rect_N} is
parametrically related to \coqe{nat_rect} in the global context: 
$$
\gCtx_{\Nat}=\text{\coqe{(nat;N;R_nat_N), (O;NO;RO), (S;NS;RS)}}
$$

Using these definitions, it becomes possible to use parametricity to automatically convert a definition over \coqe{nat} that uses the induction principle \coqe{nat_rect} to an equivalent one over \coqe{N}. 
For instance, consider the definition of \coqe{plus} defined on \coqe{nat}.
By Corollary~\ref{white box param}, it is possible to automatically deduce that the function
\begin{shaded}
\begin{coq}
  Definition plus_nat_N (n m : N) : N := nat_rect_N (fun _ => N) m (fun _ res => NS res) n.
\end{coq}
\end{shaded}
\noindent
is parametrically related to \coqe{plus}, and thus behaves in the same
way, because \coqe{plus_nat_N} is equal to
$\primeTransExt{\mathtt{plus}}{\gCtx_{\Nat}}$.
This means that for all \coqe{n m : nat} and \coqe{n' m' : N}, the following holds:
\begin{shaded}
\begin{coq}
     n SimParam n' : nat PapParam N -> m SimParam m'  : nat PapParam N -> plus n m SimParam plus_nat_N n' m'  : nat PapParam N.
\end{coq}
\end{shaded}
That is, using the heterogeneous version of the parametricity translation addresses the anticipation problem identified above, because we do not need to rely on a common interface between \coqe{nat} and \coqe{N} and define all functions over this interface, as would be required in CoqEAL~\cite{cohenAl:cpp2013}.

\paragraph*{The Limits of Parametricity} 

However, using the heterogeneous parametricity translation to obtain
automatic transport still does not scale to dependent types, because
of the computational problem of parametricity: parametrically-related 
functions behave the same \emph{propositionally} but not \emph{definitionally}.

Let us go back to the \coqe{diff} example of Section~\ref{sec:param-lift}.
Using the white box FP to get a
parametrically-related definition of \coqe{diff} over \coqe{N}, we
could expect to get
\begin{shaded}
\begin{coq}
Fail Definition diff_N n (e : NO = NS n) : False :=
  let P_N := nat_rect_N (fun _ => Type) (NO = NO) (fun n _ => False) in
  eq_rect N NO (fun n' _ => P_N n') (eq_refl NO) (NS n) e.
\end{coq}
\end{shaded}
But this term does not typecheck, as the Coq error message explains:

\begin{shaded}
\begin{center}
\verb|The term "..." has type "|\coqe{P_N (NS n)}\verb|" while it is expected to have type "|\coqe{False}\verb|".|
\end{center}
\end{shaded}

This is because even though \coqe{nat_rect} and \coqe{nat_rect_N} are
parametrically related, they are not equal by conversion.
And indeed, the equality
\coqe{nat_rect_N _ PO PS (NS n) = PS n}
only holds propositionally, but not definitionally.
This means that the premise of Corollary~\ref{white box param} does not
hold here: the term \coqe{diff} does not
typecheck in the context $\gCtx_\Nat$ where each constant, and in
particular the constant \coqe{nat_rect}, is considered as a black box
and does not come with associated computational rules.
So, while moving to a heterogeneous presentation seems promising with
respect to the anticipation problem, it is insufficient to deal with
the computation problem of parametricity.

\section{Univalence is Not Enough}
\label{sec:univ-not-enough}

We now briefly review the notion of type equivalences (\S\ref{sec:typequiv}) and the univalence principle (\S\ref{sec:univalence}), and explain why univalence alone is not sufficient for automatic transport across equivalences (\S\ref{sec:univ-autolift}).

\subsection{Type Equivalences}
\label{sec:typequiv}
A function $f : A -> B$ is an {\em equivalence} iff there exists a function $g : B -> A$ together with proofs that $f$ and $g$ are inverse of each other. More precisely, the {\em section} property states that $\forall a:A, g(f(a))=a$, and the {\em retraction} property dually states that $\forall b:B, f(g(b))=b$. An additional condition between the section and the retraction, called here the {\em adjunction condition}, expresses that the equivalence is uniquely determined by the function $f$ (and hence that being an equivalence is proof irrelevant).

\begin{definition}[Type equivalence]
Two types $A$ and $B$ are equivalent, noted $\Eq{A}{B}$, iff there exists a function $f: A -> B$ that is an equivalence. 
\end{definition}

A type equivalence therefore consists of two {\em transport functions} (\ie~$f$ and $g$), as well as three properties. The transport functions are obviously computationally relevant, because they actually construct terms of one type based on terms of the other type. Note that from a computational point of view, there might be different ways to witness the equivalence between two types, which would yield different transports. 

Armed with a type equivalence $\Eq{A}{B}$, one can therefore {\em manually} port a library that uses $A$ to a library that uses $B$, by using the $A->B$
function in covariant positions and the $B->A$ function in contravariant
positions. 
However, with
type dependencies, all uses of transport at the term level can leak at
the type level. This leakage requires not only the use of sections or
retractions to deal with type mismatches, but also additional
properties relating existing functions, as illustrated in
\S\ref{sec:intro} with the fact that the equivalence is a homomorphism with respect
to the addition on natural numbers.

This also means that while the properties of an equivalence are not used computationally for transporting from $A$ to $B$ or vice versa, their computational content can matter when one wants to exploit the equivalence of constructors that are indexed by $A$ or by $B$. For instance, to establish that a term of type $T\ (g(f(a))$ actually has type $T\ a$, one needs to rewrite the term using the section of the equivalence---which means applying it as a (computationally-relevant) function.

\subsection{Univalence}
\label{sec:univalence}

Univalence is a principle that aligns type equivalence with propositional equality~\cite{voevodsky:cmu2010}.
\begin{definition}[Univalence]
For any two types $A$, $B$, the canonical map $(A=B) -> (\Eq{A}{B})$ is an equivalence.
\end{definition}
In particular, this means that $(A=B) \Eq (\Eq{A}{B})$. 
Therefore, univalence allows us to generalize Leibniz's principle of indiscernibility of identicals, to what we call the principle of {\em Indiscernibility of Equivalents}.

\begin{theorem}[Indiscernibility of Equivalents]
\label{th:ioe}
For any $P : \Type -> \Type$, and any two types $A$ and $B$ such that $\Eq{A}{B}$, we have $\Eq{P\ A}{P\ B}$.
\end{theorem}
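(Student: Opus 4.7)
The plan is to reduce the equivalence statement to a statement about propositional equality and then transport along that equality. Concretely, I would proceed in three short steps, each of which is essentially an application of a result that is already on the table.

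First, I would invoke univalence. By assumption, $\Eq{A}{B}$ holds; univalence states that the canonical map $(A=B) \to (\Eq{A}{B})$ is an equivalence, and in particular this equivalence has an inverse. Applying this inverse to the hypothesis yields a propositional equality $p : A = B$.

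Second, I would transport $P$ along $p$. Since $P : \Type \to \Type$, the standard eliminator for propositional equality (Leibniz's indiscernibility of identicals, i.e.\ \texttt{eq\_rect} or \texttt{ap} applied to $P$) turns $p : A = B$ into an equality $P\ A = P\ B$ in $\Type$.

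Third, I would feed this equality back through the canonical map $(P\ A = P\ B) \to (\Eq{P\ A}{P\ B})$ from the definition of univalence (instantiated at the universe one level up, whose existence is implicit in the statement of univalence being available for all types). This yields the required $\Eq{P\ A}{P\ B}$.

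There is no real obstacle here: the statement is essentially a direct corollary of univalence, and it is precisely the reason univalence is advertised as a generalization of Leibniz's principle. The only mild subtlety to be aware of is universe-level bookkeeping --- one must instantiate univalence at the appropriate universe containing $P\ A$ and $P\ B$ --- and the observation that, as remarked in \S\ref{sec:typequiv}, the computational content of the resulting equivalence depends on which witness of $\Eq{A}{B}$ we started from, so the theorem does not produce a canonical transport but rather one determined by the input equivalence.
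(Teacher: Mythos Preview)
Your proof is correct and follows exactly the same route as the paper: the paper's proof is the one-line chain $\Eq{A}{B} \implies A=B \implies P\ A = P\ B \implies \Eq{P\ A}{P\ B}$, which is precisely your three steps. Your additional remarks on universe bookkeeping and the dependence of the resulting transport on the input equivalence are accurate but go beyond what the paper records for this theorem.
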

\begin{proof}
Direct using univalence: $\Eq{A}{B} \implies A=B \implies P\ A = P\ B \implies \Eq{P\ A}{P\ B}$
\end{proof}

In particular, univalence promises immediate transport: if $A$ and $B$
are equivalent, then we can always convert some $P~A$ to some
(equivalent) $P~B$, for every inhabitant of $P~A$, even axioms.
\begin{corollary}[Black Box Fundamental Property]
\label{th:bbfp-intro}
For any $P : \Type -> \Type$, and any two types $A$ and $B$ such that $\Eq{A}{B}$, there exists a function $\uparrow_{\blacksquare}\  : P\ A -> P\ B$.
\end{corollary}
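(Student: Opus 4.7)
The statement follows almost immediately from Theorem~\ref{th:ioe} (Indiscernibility of Equivalents) combined with the definition of type equivalence, so my plan is essentially a one-line derivation unpacked into its constituent steps.

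First, I would apply Theorem~\ref{th:ioe} to the hypothesis $\Eq{A}{B}$ and the predicate $P : \Type -> \Type$. This yields a type equivalence $\Eq{P\ A}{P\ B}$. Second, I would unpack this equivalence using the definition from Section~\ref{sec:typequiv}: by definition, $\Eq{P\ A}{P\ B}$ asserts the existence of a function $f : P\ A -> P\ B$ (together with a quasi-inverse, a section, a retraction, and the adjunction condition). Third, I would simply define the operator by projecting out this forward component, i.e. set $\uparrow_{\blacksquare}\ \defeq\ f$.

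Unpacking Theorem~\ref{th:ioe} itself, the chain of reasoning is: univalence turns $\Eq{A}{B}$ into a propositional equality $A = B$; the standard equality eliminator (\coqe{eq_rect} over the motive $\lambda X.\ P\ A -> P\ X$ starting from the identity on $P\ A$) then produces a function $P\ A -> P\ B$; packaging this with its obvious inverse (symmetric use of the same construction) gives the equivalence, whose first projection is the desired transport function.

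There is no real obstacle to the derivation as a mathematical statement, since each step is either an application of a previously stated theorem or a projection. The only conceptual subtlety, which the surrounding text emphasizes, is that the function so obtained has no computational content in MLTT/CIC whenever univalence is used as an axiom: the transport is ``stuck'' at the axiom and will not reduce on canonical inputs. This is precisely the black-box nature referred to in the corollary's name, and it motivates the subsequent development of univalent parametricity; but for the mere existence statement being proved here, it is irrelevant and the proof is discharged by the above three-step chain.
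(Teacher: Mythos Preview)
Your proposal is correct and is exactly the intended argument: the paper states this corollary immediately after Theorem~\ref{th:ioe} without an explicit proof, precisely because it follows by projecting the forward map out of the equivalence $\Eq{P\ A}{P\ B}$ furnished by that theorem. Your additional unpacking of the underlying chain (univalence, then \coqe{eq_rect}) and your remark on the lack of computational content match the paper's surrounding discussion.
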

We call this result the  ``Black Box'' Fundamental Property because it can be used to blindly transport a term of type $P~A$ to a term of type $P~B$, without looking at its particular
syntactical structure. As such, it is very useful to solve the computational issue of
parametricity.

\paragraph{Realizing Univalence}

In CIC and MLTT, univalence cannot be proven and is therefore defined as an {\em axiom}. 
Because the proof of Theorem~\ref{th:ioe} starts by using the univalence axiom to replace type equivalence with propositional equality, before proceeding trivially with rewriting, it has no computational content, and hence we cannot exploit (axiomatic) univalence to reap the benefits of automatic transport of programs and their properties across equivalent types. 
It is important for transport to be {\em effective}, \ie~that it has computational content. 

Intuitively, an effective function ensures {\em canonicity}: it never
gets stuck due to the use of an axiom. Conversely, a function that uses an axiom and hence ``does not compute'' is called {\em ineffective}. By extension, a type equivalence $\Eq{A}{B}$ consisting of two functions $f: A->B$ and $g: B->A$ is said to be {\em effective} iff both $f$ and $g$ are effective functions.

To solve the issue of effectiveness, Cubical Type Theory has recently been
proposed~\cite{cubicaltt,vezzosiAl:icfp2019}. This theory is an extension of
MLTT in which n-dimensional cubes can be directly
manipulated, making it possible to define a notion of equality
between two terms as the type of the line (1-dimensional cube) between
those two terms. This way, the induced notion of equality is more
extensional than the usual Martin-Löf identity type, and it satisfies
univalence computationally, so the induced transports are effective.

\subsection{Univalence vs. Automatic Lifting}
\label{sec:univ-autolift}

However, even when it is effective, univalence {\em alone} is not enough
to support the automatic transport of functions that are defined on equivalent
types. 

Let us go back to the example of addition on natural
numbers. There exists a complicated but efficient definition of addition on  binary natural numbers, \coqe{plus_N}:\footnote{This function definition corresponds to the infix notation \coqe{+_N} used in \S\ref{sec:intro}.}
\begin{shaded}
\begin{coq}
  Definition plus_N (n m : N) : N := (* complex definition *).
\end{coq}
\end{shaded}
Showing most properties of \coqe{plus_N}, such as associativity and commutativity, is much more involved than their counterparts on \coqe{nat}. Ideally, after proving once and for all that \coqe{plus_N} is ``equal'' to \coqe{plus}, one would like to be able to obtain these theorems for free by transporting the proofs for \coqe{plus} on \coqe{nat}, \ie~rewriting through this ``equality''.

The problem is that even in a univalent type theory,  \coqe{plus_N} and
\coqe{plus} cannot be proven equal directly, because they are not defined on
the same type.
Indeed, the ``equality'' between \coqe{plus_N} and \coqe{plus} is
{\em heterogeneous} and only makes sense because there is an equivalence
between \coqe{nat} and \coqe{N}. 
This means that technically, the actual equality \coqe{e_nat_N} that
can be stated and proven is between the pairs \coqe{(nat; plus)} and
\coqe{(N; plus_N)} at the telescope type
\coqe{Sigma (A : Type), (A -> A -> A)}.
Then to transport the proof of commutativity of \coqe{plus}
\begin{shaded}
\begin{coq}
  Definition plus_comm : forall (n m : nat), plus m n = plus n m.
\end{coq}
\end{shaded}
 to a proof of commutativity of \coqe{plus_N}, one needs to exhibit the
predicate
\begin{shaded}
\begin{coq}
  P_comm := fun X _ => forall (n m : X.1), X.2 m n = X.2 n m
\end{coq}
\end{shaded}
to be passed to the eliminator of equality in order to define
\coqe{plus_N_comm} as
\begin{shaded}
\begin{coq}
  Definition plus_N_comm : forall (n m:N), plus_N m n = plus_N n m :=
     eq_rect (Sigma (A : Type), (A -> A -> A)) (nat; plus) P_comm plus_comm (N; plus_N) e_nat_N
\end{coq}
\end{shaded}
This generalization step, which can quickly become complex, cannot in general be automatically inferred, and so needs to be explicitly provided by the user.

\newcommand{\addp}{\AgdaData{addp}}

In Cubical Type Theory \cite{vezzosiAl:icfp2019}, one would rather rely on the
primitive notion of dependent path and transport the proof as depicted in Figure~\ref{fig:agdaex}.
\begin{figure}
  \includegraphics[width=0.8\textwidth]{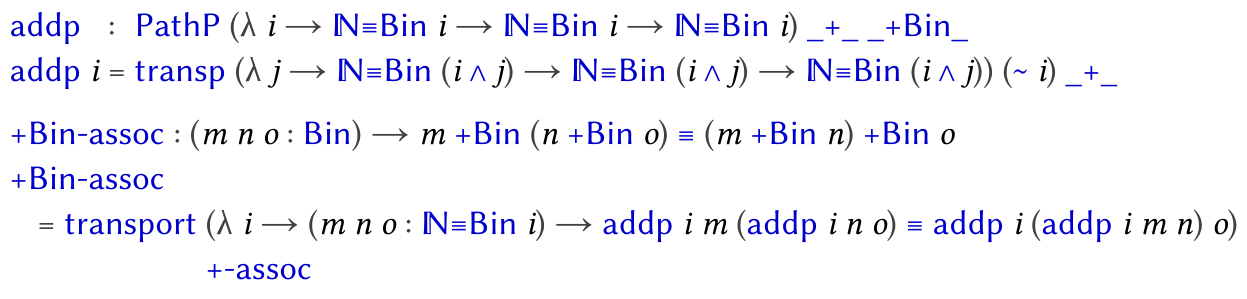}
  \caption{Transporting associativity from unary to binary naturals, in Cubical Agda (from \cite{vezzosiAl:icfp2019})}
  \label{fig:agdaex}
\end{figure}
In that case, one must still explicitly specify an abstraction of the
lemma statement to give to the transport function, and produce a
dependent path between the two notions of addition. In this example,
the addition on binary numbers $\AgdaData{\_+Bin\_}$ is defined
by a ``naive'' transport of the addition on unary numbers.  The direct
proof $\addp$ shows that it is related to addition
of unary numbers through the type equivalence
$\AgdaNat \AgdaData{$\equiv$} \AgdaData{Bin}$. Note that the proof of this equality for
the efficient addition on binary numbers would be much more involved,
but this is not the point here. The point we want to stress is that
transporting the proof term $\AgdaData{+-assoc}$  to the proof term
$\AgdaData{+Bin-assoc}$ requires the user to exhibit the predicate
$\lambda i -> (m ~ n ~ o : \AgdaNat \AgdaData{$\equiv$} \AgdaData{Bin} ~ i) ->
\addp~i~m~(\addp ~ i~n~o) \equiv \addp~i ~(\addp~i~m~n)~o$. 
We believe that, even in this simple example, a user would arguably appreciate 
some help from an automatic tool. 

This issue is very similar to the anticipation problem of parametricity
described in Section~\ref{sec:param-lift}. 
Indeed, the technique of using telescopes or dependent path types to encode heterogeneity
is akin to finding the right interface for the algebraic structure of a type.
But again, this does not scale to automation, and this limitation is independent of whether we are
in a univalent type theory or not.

However using univalence {\em does} solve the computation issue of
parametricity, as the function \coqe{diff} can be transported as a black
box, by simply using an equality \coqe{e'_nat_N} between
\coqe{(nat; (0,S))} and \mbox{\coqe{(N; (N0,NS))}} at type
\coqe{Pack := Sigma A : Type, A * (A -> A)}.
\begin{shaded}
\begin{coq}
Definition diff_N : forall (n:N) (e : N0 = NS n), False :=
  eq_rect Pack (nat; (0,S)) (fun X _ => forall (n:X.1) (e : fst X.2 = snd X.2 n), False)
                diff (N; (N0,NS)) e'_nat_N.
\end{coq}
\end{shaded}
Therefore, it seems that a combination of parametricity and univalence could address all the issues identified thus far.

\section{Univalent Parametricity in Action}
\label{sec:univ-param-acti}

This article develops the notion of {\em univalent parametricity} as a
fruitful marriage of univalence and parametricity, which leverages
their strengths while overcoming their limitations when taken in
isolation.
Specifically, univalent parametricity solves the anticipation problem
of parametricity by using (a variant of) the heterogeneous
parametricity translation, and solves the computation problem of
parametricity by using univalence.

Given two equivalent types, univalent parametricity can be used to automatically transport properties defined on one type---\eg~an easy-to-reason-about representation such as \coqe{nat}---to their counterparts on the other type---\eg~a computationally-efficient representation such as \coqe{N}.
Univalent parametricity provides the best of both parametricity and
univalence in that, to transport a term, we can use either the White Box FP 
(Th~\ref{th:wbfp-intro}) or the
Black Box FP (Th~\ref{th:bbfp-intro}), depending on the situation. 
In fact, the interplay between both modes is subtle in that white-box transport can automatically build univalent relations between two arbitrarily complex types based on user-provided relations, thereby inducing an equivalence that provides black-box transport for their inhabitants.

Univalent parametricity is a variant of the heterogeneous
parametricity translation $\heteqP{a}{b}{A}{B}$ introduced in
\S\ref{sec:param-not-enough}, simply noted~$\heteq{a}{b}{A}{B}$.
When $\heteq{a}{b}{A}{B}$ is inhabited, we say that $a$ and $b$ are
\emph{univalently related}. We sometimes omit $A$ and $B$ when they are clear from context.

The full development of univalent parametricity is in the following sections. In this section, we briefly illustrate univalent parametricity in action with  programs and proofs over \coqe{nat} and \coqe{N}. 
First, we illustrate transport {\`a} la carte, \ie~the possibility to refine automatic transport by establishing additional univalent relations (\S\ref{sec:talc}). Second, we show that univalent parametricity allows us to transport properties proven on \coqe{nat} to properties proven on \coqe{N} automatically (\S\ref{sec:auto-lifting}), and vice versa (\S\ref{sec:auto-comp}).

\subsection{Automatic Transport {\`a} la Carte}
\label{sec:talc}

Having proven the type equivalence between \coqe{nat} and \coqe{N}, we
can prove that they are univalently related,
\ie~$\heteq{\coqe{nat}}{\coqe{N}}{\Type}{\Type}$. Doing so induces an
automatic transport function, corresponding to the Black Box FP of
univalence (Th~\ref{th:bbfp-intro}). For instance we can transport a \coqe{square} function on
\coqe{nat} to an equivalent function on \coqe{N}:\footnote{ In the
  following, arithmetic operations in expressions are denoted with the
  same infix symbols (such as \coqe{+} and \coqe{*}); the actual
  operation is unambiguously determined by the type of its operands.
}
\begin{shaded}
\begin{coq}
Definition square (x : nat) : nat := x * x.
Definition square_N_bbox : N -> N := upaBB square.
\end{coq}
\end{shaded}

Note that from this section on, in code examples we use a general black-box transport operator \coqe{upaBB} whose source and target types are inferred from context, and whose  underlying equivalence is computed automatically, as will be explained in \S\ref{sec:coq}.

While \coqe{square_N_bbox} is an effective function that can be used to compute the square of any binary natural number, it is inherently inefficient computationally, because of the black-box nature of the transport: 
when applied, \coqe{square_N_bbox} first converts its \coqe{N} argument to an equivalent \coqe{nat}, applies the (slow) multiplication operation on \coqe{nat}, and finally converts back the \coqe{nat} result to a \coqe{N}:
\begin{shaded}
\begin{coq}
Check eq_refl : square_N_bbox = (fun x:N => upaBB (square (upaBB x))).
\end{coq}
\end{shaded}

At the cost of an additional proof effort, it is possible to establish that \coqe{mult} and \coqe{mult_N}
are univalently related, 
$\heteq{\text{\coqe{mult}}}{\text{\coqe{mult_N}}}{\text{\coqe{nat -> nat}}}{\text{\coqe{N -> N}}}$ 
(and likewise for \coqe{plus} and \coqe{plus_N}):
\begin{shaded}
\begin{coq}
Definition univrel_mult : mult ≈ mult_N.
\end{coq}
\end{shaded}

A first pay-off for this additional proof effort is that transport can now
automatically exploit such a relation, so that we can transport
\coqe{square} using the White Box FP of univalent parametricity, \ie~rewriting its body and exploiting the univalent relation between \coqe{mult} and \coqe{mult_N}:
\begin{shaded}
\begin{coq}
Definition square_N_wbox : N -> N := upaWB square.
\end{coq}
\end{shaded}
The notation \coqe{upaWB} corresponds to the
$\primeTransExt{\cdot}{\gCtx}$ notation of the White Box FP (Th~\ref{th:wbfp-intro}), 
where the global context $\gCtx$ is implicit. The management of the global
context and the definition of this (ad hoc) polymorphic operator are
realized in \Coq through typeclasses, as will also be explained in
\S\ref{sec:coq}.

The transported function \coqe{square_N_wbox} now computes directly on \coqe{N}, instead of converting back and forth and using the (slow) multiplication operation on \coqe{nat}.
\begin{shaded}
\begin{coq}
Check eq_refl : square_N_wbox = (fun x => (x * x)%N).
\end{coq}
\end{shaded}

\subsection{Automatic Transport of Properties}
\label{sec:auto-lifting}

Establishing that two terms like \coqe{plus} and \coqe{plus_N} are univalently related is not only valuable from a computational point of view. It also enables the automatic transport of properties that involve such terms.

Without presenting the details of univalent parametricity yet, suffice it to say that the type \coqe{plus ≈ plus_N} actually unfolds to
\begin{shaded}
\begin{coq}
  forall (x : nat) (y : N), x = upaBB y -> forall (x' : nat) (y' : N), x' = upaBB y' -> x + x' = upaBB (y + y')
\end{coq}
\end{shaded}
\noindent which gives an extensional interpretation of the heterogeneous
equality between \coqe{plus} and \coqe{plus_N}, using univalent transport \coqe{upaBB} on terms of type \coqe{N}.

Then, thanks to the univalent relation \coqe{plus ≈ plus_N}, it is possible to automatically infer the type equivalence between the type \coqe{forall n m : nat, n + m = m + n}
and the type \coqe{forall n m : N, n + m = m + n}. Consequently, the {\em proof} of commutativity for \coqe{plus} can automatically be transported to a proof
of commutativity for \coqe{plus_N}:
\begin{shaded}
\begin{coq}
Definition plus_N_comm : forall n m : N, n + m = m + n := upaBB plus_comm.
\end{coq}
\end{shaded}

Note that here, we do not face the computation issue encountered by using only
parametricity (\S\ref{sec:param-lift}) because the
term \coqe{plus_comm} is transported as a black box, \ie without
recursively diving into its syntax.

In the same way, we can define the power function on both \coqe{nat}
and \coqe{N} and show that they are univalently related.
Then, the following very simple proof of an additive property of the
power function:
\begin{shaded}
\begin{coq}
Definition pow_prop : forall n:nat, 3 ^ (n + 1)  = 3 * 3 ^ n.
  intro n; rewrite plus_comm; reflexivity. 
Qed.
\end{coq}
\end{shaded}
\noindent
can be transported automatically to the power function on binary natural numbers:
\begin{shaded}
\begin{coq}
Definition pow_N_prop : forall n:N, 3 ^ (n + 1) = 3 * 3 ^ n := upaBB pow_prop.
\end{coq}
\end{shaded}
In contrast, because adding $1$ to a binary natural number is not an
operation that preserves the inductive structure of that number, a direct
proof of this lemma by induction on the binary natural number is much
more involved. 

\subsection{Automatically Computing in the Equivalent Representation}
\label{sec:auto-comp}

Univalent parametricity can also be used the other way around to prove
properties by computation on a type representation that is not always
effective.
Consider for instance the definition of a polynomial on natural
numbers
\begin{shaded}
\begin{coq}
Definition poly : nat -> nat := fun n => 12 * n + 51 * n ^ 4 - n ^ 5.
\end{coq}
\end{shaded}
\noindent
and consider proving that \coqe{poly 50} is bigger than some given
value, say \coqe{1000}.\footnote{We thank Assia
  Mahboubi for suggesting this example, taken from an actual mechanized mathematics exercise.} 
One would like to prove this by computation, \ie~by actually calculating
the value of \coqe{poly 50} and then simply comparing the result with \coqe{1000}. However, because the unary representation is very inefficient, evaluating \coqe{poly} at \coqe{50} already exceeds the stack capacity of the Coq runtime.
\begin{shaded}
\begin{coq}
Eval compute in poly 50.
Error Stack overflow
\end{coq}
\end{shaded}
Therefore, the proof that \coqe{poly 50} is bigger than \coqe{1000}
cannot be done by computation. Univalent parametricity can overcome this issue by transporting the inequality to be proven to an equivalent
one that uses the binary number representation. 
\begin{shaded}
\begin{coq}
Goal poly 50 >= 1000.
  replace_goal; now compute.
Defined.
\end{coq}
\end{shaded}
The tactic \coqe{replace_goal} automatically infers, from
\coqe{poly 50 >= 1000}, the univalently-related proposition on binary
natural numbers using the White Box FP.
Once the goal has been transported to a property on binary natural
numbers, it is possible to proceed by computation, which produces a
goal that can be solved automatically.

Note that automatic transport also works if we consider a slightly
more complex example, where polynomials are encoded as a list of
natural numbers representing its coefficients, together with a recursive
evaluation function \coqe{evalPoly}:

\begin{shaded}
\begin{coq}
Definition polyType := list nat.

Fixpoint evalPoly (p : polyType) (n : nat) (degree : nat) : nat :=
  match p with 
  | [] => 0
  | coef :: p => coef * n ^ degree + evalPoly p n (S degree) 
  end.

Infix "@@" := (fun p n => evalPoly p n 0).  
\end{coq}
\end{shaded}

Then, defining the polynomial \coqe{poly} in this setting and
evaluating it at \coqe{50} leads to the exact same issue. 

\begin{shaded}
\begin{coq}
Definition poly' : polyType := [0;12;0;0;51;1].

Eval compute in poly' @@ 50.
Error Stack overflow
\end{coq}
\end{shaded}
  
And the univalent parametricity framework allows again to transfer the
goal to binary numbers in order to solve it by computation.  
\begin{shaded}
\begin{coq}
Goal poly' @@ 50 >= 1000.
  replace_goal; now compute. 
Defined.
\end{coq}
\end{shaded}

Because univalent parametricity is defined on all of \CIC, this proof
technique also scales to the definitions of fixpoints. For instance, consider the following sequence definition:
\begin{shaded}
\begin{coq}
Fixpoint sequence (acc n : nat) :=
  match n with
    0 => acc
  | 1 => 2 * acc
  | 2 => 3 * acc
  | S n => (sequence acc n) ^ acc
  end.
\end{coq}
\end{shaded}
Indeed, one can generically show that fixpoints preserve univalently-related arguments, which means that sequences producing unary natural numbers can be transported automatically to equivalent sequences producing binary natural numbers.
\begin{shaded}
\begin{coq}
Definition sequence_prop : sequence 2 5 >= 1000.
  replace_goal; now compute. 
Defined.
\end{coq}
\end{shaded}

In summary, univalent parametricity follows the structural, white-box approach of parametricity to infer new univalent relations from existing ones, and can then exploit the induced equivalences as computational black boxes, as in univalence, to transport proofs and terms. The following sections develop the theory of univalent parametricity for \CC (\S\ref{sec:univparam}) and CIC (\S\ref{sec:inductives}), and its realization in the \Coq proof assistant (\S\ref{sec:coq}). Then, we revisit the examples of this section, explaining how each step is implemented (\S\ref{sec:extended-example}), and discuss a case study for integrating native datatypes in \Coq (\S\ref{sec:ffi}).

\section{Univalent Parametricity}
\label{sec:univparam}

We now turn to the formal development of univalent parametricity. In this section, we focus on \CC---extension to CIC is in \S\ref{sec:inductives}. 
We consider a type theory with the minimum requirements, namely the
Calculus of Constructions with universes and the univalence
axiom. 

We first discuss in Section~\ref{sec:univ_type} the proper way to
extend the parametricity translation in the universe to take
equivalences into account. We then provide the complete univalent
translation for \CC and present the Abstraction Theorem
and its proof (\S\ref{sec:translation}). 
Note that the development here is largely independent of any
particular realization of univalence, whether axiomatic or
computational. In our axiomatic setting, let us insist that sometimes, 
a direct use of the univalence axiom
would trivialize a proof but it would also suppress its
computational content, thus leading to a useless framework in
practice. In a setting where univalence is fully realized, such as in
Cubical Type Theory, those simpler proofs could be used (see also \S\ref{sec:related}).

In \S~\ref{sec:white-box-black}, we extend the translation by
taking a global context as input, which allows us to formulate the
\emph{White Box Fundamental Property} for \CC (Corollary~\ref{white
  box}).
This property can relate terms of completely different types,
such as inductively-defined and binary-encoded naturals, providing the
global environment provides witnesses that they are in univalent
relation. This is important because we want to be able to let
programmers define their own equivalences and thus get univalent
transport \emph{\`a la carte} (\S\ref{sec:up-constants}).
Univalent parametricity on types also entails type equivalence,
which allows us to  also state a \emph{Black Box Fundamental Property}
(Proposition~\ref{black box}), which says that when two types $A$ and
$B$ are in univalent relation, any \emph{open} term $t:A$ is in
univalent relation with its (black-box) transport of type $B$.

\subsection{Univalent Parametricity on the Universe}
\label{sec:univ_type}

To strengthen parametricity to deal with equivalences, the univalent parametricity translation 
$\uparamT{-}$ must strengthen the parametricity translation on the universe
$\Type_i$.
Several intuitive solutions come to mind, which however are not
satisfactory.

First, we could simply replace the relation demanded by parametricity
to be type equivalence itself,
\ie~$\uparamT{\Type_i} \ A \ B \defeq \Eq{A}{B}$. However, by doing
so, the abstraction theorem fails on $\vdash \Type_i :
\Type_{i+1}$. We would need to establish the fixpoint on the universe,
\ie~$\uparamT{\Type_i} : \uparamT{\Type_{i+1}} \ \Type_i \ \Type_i$,
but we have
$$
\uparamT{\Type_i} : \Type_i -> \Type_i -> \Type_{i+1} \neq
\Eq{\Type_i}{\Type_i}. 
$$
In words, on the left-hand side we have an arbitrary relation on $\Type_i$, while on the right-hand side, we have an equivalence.

Another intuitive approach is to state that $\uparamT{\Type_i} \ A \ B$
requires {\em both} a relation on $A$ and $B$ {\em and an
equivalence} between $A$ and $B$. While this goes in the right
direction, it is insufficient because there is no connection between
the two notions. This in particular implies that, when scaling up from
\CC to \CIC, the identity type---which defines the notion of
equality---will not satisfy the abstraction theorem of univalent
parametricity.
We need to additionally demand that the
relation {\em coincides with propositional equality} once the values
are at the same type.

Therefore, an inhabitant of $\uparamT{\Type_i}$
is given by a relation $R : A -> B -> \Type_i$ and an equivalence $e: \Eq{A}{B}$, 
together with a {\em coherence condition} between the relation and the 
equivalence.\footnote{
  Such an inhabitant is thus a dependent 3-tuple. We will later use syntactic sugar $t=(a;b;c)$ with accessors $t.1$ $t.2$ and $t.3$ for nested pairs to ease
  reading.}
This (crucial!) condition stipulates that the relation does coincide
with propositional equality up to a transport using the equivalence, \ie~for all $a:A$ and $b:B$, the following should hold:\footnote{From an equivalence $e: \Eq{A}{B}$ we can extract functions $\Transpe{B}{} : A -> B$ and $\Transpe{A}{} : B -> A$. We simply write $\Transpe{e}{}$ to refer to either one as required by the context; these correspond to univalent (aka. black-box) transport.
 Also, in the following, when $e$ is clear from the context, we often omit it and simply write $\Transp{~}$,
as in the \Coq examples of \S\ref{sec:univ-param-acti}.}
$$
\Eq{R~a~b}{(a={\Transpe{e}{b}})}
$$

This coherence condition allows us to show that once the relation is
fixed, the rest of the data is a mere proposition. That is, the fact
that there exists an equivalence satisfying the coherence condition
just characterizes the kind of relations that can be used, but it does
not provide additional structure. This way, univalent parametricity is
really just a restriction of parametricity to relations that
correspond to equivalences.

Therefore, for $\Type_i$, we want the translation to be:
\begin{align*}
\uparamT{\Type_i}\ A\ B  & \defeq \Sigma (R : A -> B -> \Type_i)
(e:\Eq{A}{B}).  \ \Pi a\  b. \Eq{(R \  a \  b)}{(a = \Transpe{e}{b})}
\end{align*}
That is, the translation of a type (when seen as a term) needs to
include a relation plus the fact that there is an equivalence, and
that the relation is coherent with equality.

We therefore need to distinguish between the translation
of a type $T$ occurring in a {\em term position} (\ie~left of the ``$:$''), 
translated as $\uparam{T}$ and the translation of a type $T$ occurring in a 
{\em type position} (\ie~right of the ``$:$''), translated as $\uparamT{T}$.\footnote{The possibility to distinguish the translation of a type on the left
and right-hand side of a judgment has already been noticed for
other translations that add extra information to types
by \citet{boulier:hal-01445835}.
For instance, to prove the independence of univalence with \CIC, they
use a translation that associates a Boolean to any type, \eg~$\trans{\Type_i} = (\Type_i \times \mathbb{B}, \textsf{true})$. Then a type on the left-hand side is translated as a 2-tuple and $\transType{A} = \trans{A}.1$. This
possibility to add additional information in the translation of a type
comes from the fact that types in \CIC can only be ``observed''
through inhabitance, that is, in a type position; 
therefore, the translation in term positions may collect
additional information.}
The abstraction theorem on $\Type_i$ enforces the definition of the
translation on the universe hierarchy to satisfy:\footnote{$\equiv$ denotes equality by conversion.} 
\begin{align*}
  \uparam{\Type_i} : & \ \uparamT{\Type_{i+1}} \ \Type_i \ \Type_i\qquad
                     \equiv \\
& \
\Sigma (R : \Type_i -> \Type_i -> \Type_{i+1})
  (e:\Eq{\Type_i}{\Type_i}).  \ \Pi a\  b. \Eq{(R \  a \  b)}{(a = \Transpe{e}{b})}.
\end{align*}
That is, $\uparam{\Type_i}$ must be itself a triple given by
\begin{align*}
    \uparam{\Type_i} & \defeq (\lambda \ (A \ B : \Type_i), \ \Sigma (R : A -> B -> \Type_i)
                                   (e:\Eq{A}{B}).  \\[0em]
                                  &  \qquad \Pi a b. \Eq{(R \  a \
                                    b)}{(a = \Transpe{e}{b})};
                                    \id{\Type_i} ; \univTerm{\Type_i})
\end{align*}
where $\id{\Type_i}$ is simply the identity equivalence on the
universe and $\univTerm{\Type_i}$ is a proof that the univalent
relation in the universe is coherent with equality on the universe as
given below. 
\begin{proposition}\label{prop:FP_Type}
  There exists a term
  $
  \univTerm{\Type_i} : 
  \Pi A \  B.\  
  \Eq{\uparamT{\Type_i}~A~B}{(A=B)}.
  $ 
\end{proposition}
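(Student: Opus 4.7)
\medskip

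\noindent\textbf{Proof plan.} The plan is to exhibit a chain of equivalences
$$\uparamT{\Type_i}~A~B \;\Eqr\; \Eq{A}{B} \;\Eqr\; (A=B),$$
where the second step is a direct invocation of the univalence axiom. The real work is in the first step, which will follow from the observation that the relation $R$ and the coherence witness in a triple $(R;e;\mathit{coh}) : \uparamT{\Type_i}~A~B$ are entirely determined (up to equivalence) by $e$.

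\medskip

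\noindent\textbf{Main step.} First I would consider the second-projection map
$$\pi_2 : \uparamT{\Type_i}~A~B \longrightarrow \Eq{A}{B},\qquad (R;e;\mathit{coh}) \mapsto e,$$
and argue that $\pi_2$ is an equivalence by showing that each of its fibers is contractible. The fiber over $e : \Eq{A}{B}$ is
$$F_e \;\defeq\; \Sigma\,(R : A \to B \to \Type_i).\ \Pi a\ b.\ \Eq{R\,a\,b}{(a = \Transpe{e}{b})}.$$
Using the univalence axiom pointwise on the coherence data, $\Eq{R\,a\,b}{(a=\Transpe{e}{b})}$ is equivalent to the propositional equality $R\,a\,b = (a = \Transpe{e}{b})$, and then function extensionality (twice) collapses the pointwise equalities on $R$ into a single equality $R = \lambda a\,b.\,(a = \Transpe{e}{b})$. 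Hence $F_e$ is equivalent to the based path space
$$\Sigma\,(R : A \to B \to \Type_i).\ R = \lambda a\,b.\,(a = \Transpe{e}{b}),$$
which is contractible by the standard singleton lemma. Therefore $\pi_2$ has contractible fibers, so $\pi_2$ is an equivalence and $\uparamT{\Type_i}~A~B \Eqr \Eq{A}{B}$.

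\medskip

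\noindent\textbf{Finishing up.} Composing with the univalence equivalence $\Eq{A}{B} \Eqr (A=B)$ (read in the direction converting an equivalence into an equality) yields
$$\univTerm{\Type_i} \;\defeq\; \lambda A\,B.\ (\pi_2)^{-1} \circ \mathsf{ua}^{-1} \;:\; \Pi A\,B.\ \Eq{\uparamT{\Type_i}~A~B}{(A=B)},$$
which is the desired term. Note that in one direction this construction simply sends an equality $p : A = B$ to the canonical triple
$(\lambda a\,b.\,(a = \Transp{p}{b});\ \mathsf{idtoequiv}\,p;\ \lambda a\,b.\,\mathsf{id})$, so the term has transparent content and the construction is natural in $A$ and $B$.

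\medskip

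\noindent\textbf{Main obstacle.} The delicate point is that the coherence component of $\uparamT{\Type_i}~A~B$ uses a \emph{type equivalence} $\Eq{R\,a\,b}{(a = \Transpe{e}{b})}$ rather than a propositional equality, so the contractibility of $F_e$ genuinely relies on univalence (to turn each such equivalence into an equality) together with function extensionality (to aggregate the pointwise equalities into a single equality of relations). In a purely axiomatic setting, this means the resulting term $\univTerm{\Type_i}$ is not effective and will get stuck on these uses of univalence; in a computational setting (e.g., cubical type theory) one would expect it to reduce as usual.
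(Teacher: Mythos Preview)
Your proof is correct and follows essentially the same route as the paper: both establish the chain $\uparamT{\Type_i}~A~B \Eqr (\Eq{A}{B}) \Eqr (A=B)$, and both prove the first equivalence by using univalence pointwise on the coherence component, then function extensionality to collapse to a singleton $\Sigma R.\ R = (\lambda a\,b.\,a=\Transpe{e}{b})$. The only cosmetic difference is that the paper first rearranges the dependent sum to put $e$ outermost and then observes the inner $\Sigma$ is contractible, whereas you phrase the same fact as ``the projection to $e$ has contractible fibers''; these are two ways of saying the same thing.
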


\begin{proof}

  The definition of $\univTerm{\Type_i}$ crucially relies on univalence (and actually is
  equivalent to it), so this equivalence is not effective.

  This result requires functional extensionality, \ie~the fact that the canonical map 
  $$
  f = g  -> \Pi (x:A). f \  x = g \  x
  $$
  is an equivalence. This property is in fact a consequence of
  univalence~\cite{hottbook}.

  By univalence and rearrangement of dependent sums, the type
  $\uparamT{\Type_i}~A~B$ is equivalent to
  $$
  \Sigma (e:\Eq{\Type_i}{\Type_i}) (R : \Type_i -> \Type_i ->
  \Type_{i+1}). \ \Pi a\  b. R \  a \  b = (a = \Transpe{e}{b})
  $$
  which by functional extensionality is equivalent to
  $$
  \Sigma (e:\Eq{\Type_i}{\Type_i}) (R : \Type_i -> \Type_i ->
  \Type_{i+1}). \ R = (\lambda~a~b.~a = \Transpe{e}{b})
  $$
  But $\Sigma (R : \Type_i -> \Type_i ->
  \Type_{i+1}). \ R = (\lambda~a~b.~a = \Transpe{e}{b})$ is a
  singleton type, which is always contractible. 
  So we get an equivalence
  $$
  \Eq{\uparamT{\Type_i}~A~B}{(\Eq{A}{B})}
  $$
  and we conclude by univalence again.
\end{proof}

\subsection{The Univalent Parametricity Translation}
\label{sec:translation}

We now turn to the full definition of the univalent parametricity
translation on the whole syntax of \CC, including variables,
application and lambda expressions, as a variation on the
parametricity translation in the style of Bernardy \etal (recall
Figure~\ref{fig:param} of \S\ref{sec:param-as-logic}).
Figure~\ref{fig:univ} shows how to extend the parametricity
translation to force the relation defined between two types to
correspond to a type equivalence with the coherence condition, as exposed in 
\S\ref{sec:univ_type}.
Note that the translation does not target \CC but rather \CICU, which
is \CIC augmented with the univalence axiom. We write $\Gamma \vdashu t:T$ to
stipulate that the term is typeable in \CICU. 

\begin{figure}[t]
\begin{align*}
\uparam{\Type_i} & \defeq (\lambda \ (A \ B : \Type_i), \ \Sigma (R : A -> B -> \Type_i)
                                   (e:\Eq{A}{B}).  \\[0em]
                                  &  \qquad \Pi a b. \Eq{(R \  a \  b)}{(a = \Transpe{e}{b})};  \id{\Type_i} ; \univTerm{\Type_i}) \\[0em]
\uparam{\Pi a:A. \ B}  & \defeq (\lambda \ (f : \Pi a:A. B) ~ (g: \Pi
                         a':A'. B'). \\ & \hspace{3.5em} \Pi
                              (a:A) (a':A') (\epsTrans{a}: \uparamT{A} \ a \
a'). \ \uparamT{B}~(f\ a)~(g\ a');  \\
& \qquad \equivPi~A~A'~ \uparam{A}~B~B'~ \uparam{B}; \univTerm{\Pi}~A~A'~ \uparam{A}~B~B'~ \uparam{B}) \\[0em]
% %
\uparam{x} & \defeq \epsTrans{x} \\[0em]
\uparam{\lambda x:A. t} & \defeq \lambda (x:A) (x':A')
                          (\epsTrans{x}:\uparamT{A}\ x\ x'). \ \uparam{t}\\[0em]
\uparam{t\ u} & \defeq \uparam{t} \ u \ u' \ \uparam{u} \\[1em]
\uparamT{A} & \defeq \uparam{A}.1 \quad
\uparamEq{A} \defeq  \uparam{A}.2 \quad
\uparamCoh{A} \defeq  \uparam{A}.3 \\[1em]
\uparamT{\cdot} & \defeq \cdot \\
\uparamT{\Gamma,x:A} & \defeq \uparamT{\Gamma},
  x:A,\ x':A',\ \epsTrans{x} : \uparamT{A} \ x \ x'
\end{align*}
\caption{Univalent parametricity translation for \CC}
\label{fig:univ}
\end{figure}

As explained in the previous section, the definition of the translation
of a type $A$ is more complex than that of Figure~\ref{fig:param}
because in addition to the relation $\uparamT{A}$, we need an
equivalence $\uparamEq{A}$ and a witness $\uparamCoh{A}$ that the
relation is coherent with equality.
This is why the translation of dependent products makes use of two
additional terms $\equivPi$ and $\univTerm{\Pi}$ that will be
explained during the proof of the Abstraction Theorem.

For the other terms, the translation does not change with respect to
parametricity except that $\uparamT{-}$ must be used accordingly when
we are denoting the relation induced by the translation and not
the translation itself.

\begin{theorem}[Abstraction theorem]
\label{thm:abstraction}
  If $\Gamma \vdash t :A$ then 
  $\uparamT{\Gamma} \vdashu \uparam{t} : \uparamT{A}\ t\ t'$.
\end{theorem}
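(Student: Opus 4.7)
The plan is to proceed by induction on the derivation of $\Gamma \vdash t : A$, following the same overall structure as the corresponding result for plain parametricity (Theorem~\ref{thm:abstraction-param}), but with extra work wherever the translation carries additional equivalence/coherence data. The variable, $\lambda$, and application cases are essentially unchanged from Figure~\ref{fig:param}: for a variable $x$ the context translation literally supplies $\epsTrans{x} : \uparamT{A}\ x\ x'$; for $\lambda x{:}A.\,t$ the inductive hypothesis on the body produces a term of the required $\Pi$-type after one uses the projection $\uparamT{A} = \uparam{A}.1$; and for an application $t\,u$ one simply instantiates the IH on $t$ at $u$, $u'$ and the IH on $u$, using that $\uparamT{\Pi a{:}A.\,B}\ f\ g$ unfolds definitionally to the pointwise relation from Figure~\ref{fig:univ}.

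The two nontrivial cases are the universe and the dependent product, because their translations must produce a full triple $(R; e; \mathit{coh})$ inhabiting $\uparamT{\Type_{i+1}}\ \_\ \_$. For $\vdash \Type_i : \Type_{i+1}$ the relation $R$ is the one spelled out in Figure~\ref{fig:univ}, the equivalence component is $\id{\Type_i}$, and the coherence is exactly the term $\univTerm{\Type_i}$ built in Proposition~\ref{prop:FP_Type}; so the case reduces to checking that these three components have the expected types, which is immediate from the definition of $\uparamT{\Type_{i+1}}$.

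For $\Gamma \vdash \Pi a{:}A.\,B : \Type_i$, I would first use the IH on $A$ and on $B$ (the latter in the extended context $\Gamma, a{:}A$) to obtain triples $\uparam{A} : \uparamT{\Type_i}\ A\ A'$ and $\uparam{B}$ giving a family of triples at $\uparamT{\Type_i}\ (B\,a)\ (B'\,a')$ for related $a,a'$. The relation component is the obvious pointwise one. The equivalence component is the term $\equivPi~A~A'~\uparam{A}~B~B'~\uparam{B}$, which I would define by the standard construction of an equivalence on $\Pi$-types: given domain equivalence $e_A = \uparamEq{A}$ and pointwise codomain equivalences $e_B(a) = \uparamEq{B}$, the forward map sends $f : \Pi a{:}A.\,B$ to $\lambda a'.\,\Transpe{e_B(\Transpe{e_A}{a'})}{f(\Transpe{e_A}{a'})}$, with the backward map defined symmetrically; the section, retraction, and adjunction laws follow from those of $e_A$ and $e_B$ together with functional extensionality (which is a consequence of univalence, as used in Proposition~\ref{prop:FP_Type}). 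The coherence component $\univTerm{\Pi}$ is the harder piece: it must exhibit, for each $f,g$, an equivalence between the pointwise relation $\Pi a\,a'\,\epsTrans{a}.\ \uparamT{B}\ (f\,a)\ (g\,a')$ and the propositional equality $f = \Transpe{\equivPi}{g}$. I would obtain it by chaining: first rewrite each $\uparamT{B}\ (f\,a)\ (g\,a')$ into $f\,a = \Transpe{e_B(a)}{g\,a'}$ using $\uparamCoh{B}$, then use $\uparamCoh{A}$ to replace the universally-quantified pair $(a,a',\epsTrans{a})$ by a single $a' : A'$ via the contractibility of singletons, and finally apply functional extensionality to collapse the pointwise equality into the desired equality of functions.

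The main obstacle, as usual in this kind of development, is the $\Pi$ case and specifically the definition of $\univTerm{\Pi}$: the bookkeeping between the three sides of the triple $(R; e; \mathit{coh})$ must be kept coherent, and one has to be careful to only use the univalence axiom at the places where canonicity is anyway unavoidable (namely, inside $\univTerm{\Pi}$ itself), so that the relation $\uparamT{\Pi a{:}A.\,B}$ and the transport $\Transpe{\equivPi}{-}$ still compute on closed terms. One also has to handle the conversion rule: if $\Gamma \vdash t : A$ and $A \equiv A''$, then $\uparamT{A} \equiv \uparamT{A''}$, which follows because the translation is defined by recursion on syntax and commutes with substitution (a routine lemma that I would state and prove separately before the main induction).
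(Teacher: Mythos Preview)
Your proposal is correct and follows essentially the same approach as the paper: induction on the typing derivation, with the variable/abstraction/application cases inherited from plain parametricity, and the real work concentrated in the universe case (handled by Proposition~\ref{prop:FP_Type}) and the dependent product case (constructing $\equivPi$ and $\univTerm{\Pi}$, using functional extensionality). The only presentational difference is that the paper reduces the construction of $\equivPi$ to the standard HoTT lemma \texttt{equiv\_functor\_forall} by first invoking the coherence condition $\projiii{\UR{AB}}$ to replace $\proji{\UR{AB}}~a~b$ by $a = \Transp{b}$, whereas you build the forward/backward maps by hand; and for $\univTerm{\Pi}$ the paper merely gestures at ``repeated use of commutativity lemmas of transport of equality over functions'' and defers to the Coq development, while your sketch via $\uparamCoh{B}$, contractibility of singletons from $\uparamCoh{A}$, and functional extensionality is a reasonable unfolding of that same argument.
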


\begin{proof}
  The proof is a straightforward induction on the typing derivation.
  The only cases that differ from Theorem~\ref{thm:abstraction-param}
  are the typing rules for the universe and for the type of dependent functions.
  The case of the universe has already been addressed in \S~\ref{sec:univ_type}.
  For dependent products, we first need to provide a term that
  witnesses the fact that the dependent product is congruent with
  respect to equivalences in the following sense:
  \begin{align*}
    \equivPi : \ & \Pi \ (A \  B : \Type_i) \  (\UR{AB}: \uparamT{\Type_i} ~ A ~ B).
    \\  & \Pi \ (P : A -> \Type_i) \ (Q : B -> \Type_i) \
         \\ & \phantom{\Pi \ } (\UR{PQ}  :  \Pi (a:A) \  (b:B). \ \proji{\UR{AB}}~a~b
          ->  \uparamT{\Type_i} ~ (P a) ~ (Q b))).
    \\ & \phantom{\Pi \ } \Eq{(\Pi (a:A). \  P \  a)}{(\Pi (b:B).  \  Q \  b)}
  \end{align*}
  In particular, we have  
  $ \projii{\UR{AB}} : \Eq{A}{B}$ and $\lambda~a~b~r, \
\projii{(\UR{PQ}~a~b~r)} : \Pi (a:A) \  (b:B). \ \proji{\UR{AB}}~a~b -> \Eq{P \  a}{Q \  b}$.
Using the coherence condition $\projiii{\UR{AB}}$ between $ \proji{\UR{AB}}~a~b $ and $a = \Transp{b}$, this boils down to 
$\Pi (a:A). \  \Eq{P \  a}{Q \  (\Transp{a})}$.

At this point we can apply a standard result of HoTT, namely
\coqe{equiv_functor_forall} in the Coq HoTT
library~\cite{Bauer:2017:HLF:3018610.3018615}. This lemma requires functional extensionality in the proof that the two transport functions form an equivalence.\footnote{The definition of the inverse function requires using 
the retraction, and the proof that it forms a proper equivalence requires the adjunction condition (\S\ref{sec:typequiv}).
This means that the dependent function type would not be univalent if we
replaced type equivalence with a simpler notion, such as the possibility
to go from one type to another and back, or even by isomorphisms.}

The second step is to provide a proof that the univalent (pointwise)
relation on the dependent product is coherent with equality up to the
equivalence above; \ie~we need to define a term
\begin{align*}
\univTerm{\Pi} : \ & \Pi \ (A \  B : \Type_i) \  (\UR{AB}: \uparamT{\Type_i} ~ A ~ B).
  \\  & \Pi \ (P : A -> \Type_i) \ (Q : B -> \Type_i) \
 \\ & \phantom{\Pi \ } (\UR{PQ}  :  \Pi (a:A) \  (b:B). \ \proji{\UR{AB}}~a~b
      ->  \uparamT{\Type_i} ~ (P\ a) ~ (Q\ b))).
  \\ & \phantom{\Pi \ } \Pi f \  g.  \Eq{
       (\Pi (a:A) (b:B) (r:\proji{\UR{AB}}~a~b). \ \proji{(\UR{PQ}~a~b~r)}~(f\ a)~(g\ a'))
       }{(f={\Transp{g}})} 
\end{align*}
This part is quite involved. In essence, this is where we prove that transporting in many hard-to-predict places is equivalent to 
transporting only at the top level.
This is done by repeated use of commutativity lemmas of transport of
equality over functions.
We refer the reader to the Coq development for more details.
\end{proof}

\subsubsection{Prop}
\label{sec:prop}

The definition of the univalent parametricity translation of
Figure~\ref{fig:univ} does not deal with the universe $\Prop$ of
proposition, but it can be treated in the same way
as $\Type_i$ because $\Prop:\Type_i$ is a universe that also enjoys
the univalence axiom. The only specificity of $\Prop$ is its
impredicativity, which does not play any role here.

It is also possible to state a stronger axiom on $\Prop$
called \emph{propositional extensionality}, which uses logical
equivalences instead of type equivalences in its statement:
$$
\Eq{(P=Q)}{(P <=> Q)}.
$$
This axiom cannot be deduced from univalence alone,
one would need proof irrelevance for $\Prop$ as well.
As we are looking for the minimal amount of axioms needed for
establishing univalent parametricity, we do not make use of this stronger axiom.

Note that exploiting the fact that $\Prop$ is proof
irrelevant, $\uparamT{\Prop}~P~Q$ boils down to 
$$
\Sigma (R : P -> Q -> \Type). \ ( P <=> Q) * (\Pi (p:P) \ (q:Q) . \
        \mathtt{IsContr} (R~p~q)).
$$
where $*$ is the product of types and $\mathtt{IsContr} \ A$ says that
$A$ is contractible, \ie it has a unique inhabitant.
This is because for all $p$ and $q$, the type $(p={\Transp{q}})$ is
contractible, and being equivalent to a contractible type is the same
as being contractible.
The definition we obtain in this case coincides with the definition
of parametricity with uniformity of propositions developed by \citet{DBLP:journals/corr/AnandM17} (more details in
\S\ref{sec:related}).

\subsection{White Box and Black Box Fundamental Properties}
\label{sec:white-box-black}

The extension of the parametricity translation that takes a global context into account
presented in order to handle heterogeneous instance (\S\ref{sec:het-param}) 
can also be applied to univalent parametricity.
Recall that we consider a global context $\gCtx$ to be the following telescope $\gCtx_n$ defined as:
\begin{align*}
\gCtx_0 & = \cdot \\
\gCtx_1 & = (\fstTrans{c}_1 : \fstTrans{A}_1 ;\ \sndTrans{c}_1 :\sndTrans{A}_1 ;\ \RTrans{c}_1 : 
\uparamWB{A_1}{\gCtx_0}\ \fstTrans{c}_1\ \sndTrans{c_1}) \\
\ldots & \\
\gCtx_n & = \gCtx_{n-1}, 
            (\fstTrans{c}_n : \fstTrans{A}_n ;\ \sndTrans{c}_n : \sndTrans{A}_n ;\ \RTrans{c}_n : \uparamWB{A_n}{\gCtx_{n-1}}\ \fstTrans{c}_n\ \sndTrans{c}_n)
\end{align*}
For simplicity, we reuse the notation of Section~\ref{sec:het-param}
to get a similar White Box FP for univalent
parametricity. In particular, the definition of
Figure~\ref{fig:univ} is likewise extended on constants as follows:
$$\uparamWB{\fstTrans{c}}{\gCtx} \defeq \RTrans{c} \text{ when }
(\fstTrans{c} :\_\ ;\ \sndTrans{c}:\_\ ;\ \RTrans{c}: \_) \in \gCtx$$

\begin{corollary}[White Box Fundamental Property]
  \label{white box}
  If $\contextFst{\gCtx}~\vdash a:A$ then
  $\contextSnd{\gCtx}~\vdash \primeTransExt{a}{\gCtx}: \primeTransExt{A}{\gCtx}$ and
  $\contextAll{\gCtx}~\vdashu \uparamWB{a}{\gCtx} : \uparamTWB{A}{\gCtx}\ a\ \primeTransExt{a}{\gCtx}$.
\end{corollary}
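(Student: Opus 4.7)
The plan is to mirror the proof of Corollary~\ref{white box param}, replacing the parametricity translation by its univalent counterpart and invoking the univalent Abstraction Theorem (Theorem~\ref{thm:abstraction}) in place of the plain one. The idea is that the global context $\gCtx$ is precisely what a well-chosen substitution on the telescope $\uparamT{\contextFst{\gCtx}}$ produces; once this is made precise, both claims reduce to simple substitution lemmas applied to the conclusion of the Abstraction Theorem.

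Concretely, I would first construct a substitution $\sigma$ from the flattened global context $\contextAll{\gCtx}$ into $\uparamT{\contextFst{\gCtx}}$ by sending each triple $(\fstTrans{c},\sndTrans{c},\RTrans{c}) \in \gCtx$ to $(\fstTrans{c}, \primeTrans{\fstTrans{c}}, \epsTrans{\fstTrans{c}})$, exactly as in the proof of Corollary~\ref{white box param}. The well-typedness of $\sigma$ is immediate from the definition of $\gCtx$ as a telescope, where the type of $\RTrans{c}_k$ is built from $\uparamWB{A_k}{\gCtx_{k-1}}$, which is precisely the substituted form of $\uparamT{A_k}$ in the prime-renamed variables.

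Next, I would apply Theorem~\ref{thm:abstraction} to the hypothesis $\contextFst{\gCtx} \vdash a : A$, obtaining
\[
  \uparamT{\contextFst{\gCtx}} \vdashu \uparam{a} : \uparamT{A}\,a\,\primeTrans{a}.
\]
Applying the substitution $\sigma$ yields a derivation in $\contextAll{\gCtx}$, and a direct induction on the syntax of terms shows that $\sigma(\uparam{a}) \equiv \uparamWB{a}{\gCtx}$ and $\sigma(\primeTrans{a}) \equiv \primeTransExt{a}{\gCtx}$: indeed, both translations agree on all constructors of \CC, and the only divergence is on free constants $\fstTrans{c} \in \gCtx$, where the definitions were deliberately aligned so that $\sigma$ maps $\epsTrans{\fstTrans{c}}$ to $\RTrans{c}$ and $\primeTrans{\fstTrans{c}}$ to $\sndTrans{c}$. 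This gives the second conclusion.

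For the first conclusion, $\contextSnd{\gCtx} \vdash \primeTransExt{a}{\gCtx} : \primeTransExt{A}{\gCtx}$, I would either give a direct induction on the derivation of $\contextFst{\gCtx} \vdash a : A$ (checking that prime-renaming preserves typing under the swap of $\contextFst{\gCtx}$ for $\contextSnd{\gCtx}$, which uses only that $\sndTrans{c} : \sndTrans{A}$ is declared in $\gCtx$), or derive it as a corollary of the second conclusion by projecting the conclusion through the universe rule. The main subtlety, and the only step requiring care, is verifying the substitution equalities $\sigma(\uparam{a}) \equiv \uparamWB{a}{\gCtx}$ for the universe and $\Pi$-types, since there $\uparam{\Type_i}$ and $\uparam{\Pi a{:}A. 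B}$ are nontrivial triples involving $\univTerm{\Type_i}$, $\equivPi$, and $\univTerm{\Pi}$; but these terms are themselves closed and do not mention constants of $\gCtx$, so $\sigma$ acts on them as the identity and the definitional equality holds on the nose.
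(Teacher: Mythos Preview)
Your proposal is correct and follows exactly the approach the paper takes: the paper's proof simply reads ``Similar to the proof of Corollary~\ref{th:wbfp-intro}'', and you have faithfully expanded that reference by constructing the same substitution $\sigma$ and invoking Theorem~\ref{thm:abstraction} in place of Theorem~\ref{thm:abstraction-param}. Your additional remarks about the first conclusion and about the closed terms $\univTerm{\Type_i}$, $\equivPi$, $\univTerm{\Pi}$ being fixed by $\sigma$ are sound elaborations that the paper leaves implicit.
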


\begin{proof}
  Similar to the proof of Corollary~\ref{th:wbfp-intro}.
\end{proof}

As we have done for parametricity, we introduce the notation
$$
\heteq{a}{b}{A}{B} \defeq \uparamTWB{A}{\gCtx}~a~b \qquad\text{ (when $B = \primeTransExt{A}{\gCtx}$)}
$$
which relates two terms $a$ and $b$ at
two related---but potentially different---types $A$ and $B$.
This notation only makes sense when
$
B = \primeTransExt{A}{\gCtx}
$
in the current global context $\gCtx$, and we implicitly assume it is
the case when we use this notation.

The White Box FP is the usual result obtained using
parametricity, just rephrased using a context of global constants.
But in the case of univalent parametricity, we have an additional
property coming from the use of equivalences, which is {\em oblivious} to the
structure of the term, and is defined for any well-typed terms, even
in an open context, and in particular in presence of axioms.
Indeed, once we know that two types $A$ and $B$ are univalently
related, there is a canonical term in $B$ related to any term $t$ of
type $A$ obtained from the following property:

\begin{proposition}[Black Box Fundamental Property]
  \label{black box}
  Let $A$ and $B$ be two types such that $\heteq{A}{B}{\Type}{\Type}$,
  then for all $t:A$, there is a proof that $\heteq{t}{\Transp{t}}{A}{B}$.
\end{proposition}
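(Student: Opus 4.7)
The plan is to unpack the hypothesis $\heteq{A}{B}{\Type}{\Type}$, unfold the goal, and then reduce the goal to the section property of the underlying equivalence via the coherence condition. The proof is short precisely because it is black-box: no induction on the structure of $t$ is required.

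First, I would unpack the assumption. The hypothesis $\heteq{A}{B}{\Type}{\Type}$ is, by definition, an inhabitant $\UR{AB}$ of $\uparamT{\Type}~A~B$, which is a triple $(R, e, c)$ where $R = \uparamT{A} : A \to B \to \Type$ is the relation, $e = \uparamEq{A} : \Eq{A}{B}$ is the underlying equivalence (from which $\Transp{\cdot}$ is derived), and $c = \uparamCoh{A}$ is the coherence proof $\Pi a\ b.\ \Eq{R~a~b}{(a = \Transpe{e}{b})}$. The goal $\heteq{t}{\Transp{t}}{A}{B}$ then unfolds to $R~t~(\Transpe{e}{t})$.

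Second, I would apply the coherence condition at the specific values $a := t$ and $b := \Transpe{e}{t}$. This yields an equivalence
\[
\Eq{R~t~(\Transpe{e}{t})}{\bigl(t = \Transpe{e}{(\Transpe{e}{t})}\bigr)},
\]
so it suffices to construct an inhabitant of the right-hand side, where the inner $\Transpe{e}{\cdot}$ goes $A \to B$ and the outer one goes $B \to A$.

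Third, this equality is—up to symmetry of equality—exactly the section property of the equivalence $e$ applied to $t$, which is part of the data of any $e : \Eq{A}{B}$ as recalled in \S\ref{sec:typequiv}. Transporting the resulting proof back along the coherence equivalence produces the desired inhabitant of $R~t~(\Transpe{e}{t})$, i.e., of $\heteq{t}{\Transp{t}}{A}{B}$.

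I do not expect any genuine obstacle here: the real mathematical content was already absorbed into the definition of $\uparamT{\Type}$, whose coherence condition was specifically engineered (\S\ref{sec:univ_type}) so that relatedness reduces to equality after one application of $\Transp{\cdot}$. The only point to be careful about is bookkeeping the two directions of transport (the section versus the retraction), and to notice that this property, in contrast to the White Box Fundamental Property (Corollary~\ref{white box}), applies to any well-typed $t$, even one living in an open context with axioms, since it never inspects the syntactic structure of $t$.
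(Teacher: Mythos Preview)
Your argument is correct, but it takes a small detour compared to the paper's own proof. You instantiate the coherence condition in its ``Figure~\ref{fig:univ}'' form $\Eq{R~a~b}{(a = \Transpe{e}{b})}$ at $b := \Transpe{e}{t}$, which forces you to discharge $t = \Transpe{e}{(\Transpe{e}{t})}$ via the section of the equivalence. The paper instead appeals to reflexivity: using the (equivalent) dual form of coherence---the one adopted in the \Coq framework, $\Eq{(a =_A a')}{(a \approx \Transp{a'})}$---one simply takes $a = a' = t$ so that $t =_A t$ by $\eqrefl$ immediately yields $t \approx \Transp{t}$, without ever touching the section. Both routes are valid and equally short; the paper's just avoids unpacking the equivalence data beyond the transport map, which is consonant with the ``black-box'' spirit of the proposition. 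Your closing remark about the property holding in open contexts is spot on and matches the paper's own emphasis.
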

\begin{proof}
  By reflexivity of equality, $\Transp{t} =_B \Transp{t}$, which gives
  the desired result by using the coherence condition between equality
  and the relation. 
\end{proof}

Note that the Black Box FP is internal to the
theory, as opposed to the White Box FP, which is external. In
particular, this means that the Black Box FP is
valid in any context on open terms, which is crucial to define
automatic transport.

\subsection{Univalent Parametricity for Transport {\`a} la Carte}
\label{sec:up-constants}

The Black Box Fundamental Property (Proposition~\ref{black box}) is a key advantage of univalent parametricity over traditional parametricity,
because knowing that two types $A$ and $B$ are univalently related is enough to get a transport function from $A$ to $B$, whereas traditional parametricity requires the exact definition of the term $a$ in $A$ in order to compute its counterpart in $B$.
Now, it remains to investigate how to determine that two types are
univalently related.

The White Box FP (Corollary~\ref{white box}) allows us to enrich
the univalent relation ``outside the diagonal'', \ie~to provide
heterogeneous instances in the global context.
For instance, we can relate unary naturals \coqe{nat} and binary naturals
\coqe{N}, \ie~$\heteq{\coqe{nat}}{\coqe{N}}{\Type}{\Type}$.
By combining this basic relation with the fact that type constructors
are univalently parametric, it is possible to automatically derive 
that:
$$
\heteq{\coqe{nat -> nat -> nat}}{\coqe{N -> N -> N}}{\Type}{\Type}
$$
But more interestingly, not only univalent relation instances between
types can be added, but also instances between any two terms, seen as new constants of the theory.
For instance, consider the case of the definitions of the addition on
unary and binary natural numbers \coqe{plus : nat -> nat -> nat}
and \coqe{plus_N : N -> N -> N}. One can show that they are
univalently related
$$
\heteq {\coqe{plus}}{\coqe{plus_N}}{\coqe{nat -> nat -> nat}}{\coqe{N -> N -> N}}
$$
which, as illustrated in \S\ref{sec:univ-param-acti}, allows automatic
transport to be more computationally efficient, and proofs of results
involving addition to be transported automatically.
Indeed, there exists a context $\gCtx$ that relates
$$
\contextFst{\gCtx} \equiv
\mbox{\coqe{nat:Type, plus:nat -> nat -> nat, mult:nat -> nat -> nat}}
$$
and
$$
\contextSnd{\gCtx} \equiv
\mbox{\coqe{N:Type, plus_N:N -> N -> N, mult_N:N -> N -> N}}.
$$
As discussed in Section~\ref{sec:talc}, applying the White Box FP to the term \coqe{square} directly gives us the term \coqe{square_N_wbox} together with a proof that it is related to \coqe{square} in the context $\gCtx$.

But the White Box FP also automatically gives us that the two following types are related
$$
\heteq{\coqe{forall (varn varm : nat), plus varn varm = plus varm varn}}{\coqe{forall (varn varm : N), plus_N varn varm = plus_N varm varn}}{\Type}{\Type}
$$
because
\coqe{forall (varn varm : N), plus_N varn varm = plus_N varm varn} is
the prime translation of
\coqe{forall (varn varm : nat), plus varn  varm = plus varm varn}.
Then, using the Black Box FP provides a direct transport
\coqe{plus_N_comm} of the proof of commutativity of addition
\coqe{plus_comm}.
Note that the computational content of \coqe{plus_comm} is not
used to derive \coqe{plus_N_comm}.
Therefore, univalent parametricity naturally provides a framework to
transport functions and proofs of theorems {\`a} la carte, depending
on the univalent relation context that has been specified by the user, thanks to the interplay between white-box and black-box transports.

\section{Univalent Parametricity for Inductive Types}
\label{sec:inductives}

We now turn to the extension of univalent parametricity in theories that provide inductive types, such as the Calculus of Inductive Constructions (\CIC)~\cite{Paulin15}. 
We first give the general idea of the approach, and then proceed step-by-step, first considering dependent pairs (\S\ref{sec:dependent pairs}), records (\S\ref{sec:records}), parameterized recursive inductive families
(\S\ref{sec:paramrecinductives}), and finally indexed inductive types
(\S\ref{sec:gadts}). 

An inductive type is defined as a new type constructor, together with
associated constructors and an elimination principle.\footnote{There
  is an equivalent presentation of inductive types with pattern
  matching instead of eliminators~\cite{Goguen2006}. In Coq,
  eliminators are automatically inferred and defined using pattern
  matching.}
For instance, the inductive type of lists is\footnote{In this section,
to ease the reading, we navigate between the syntax of \CIC and the
one of Coq when appropriate.}
\begin{shaded}
\begin{coq}
 Inductive list (A : Type) : Type :=
     nil : list A
  | cons : A -> list A -> list A
\end{coq}
\end{shaded}
where \coqe{nil} and \coqe{cons} are the constructors of the inductive
type. 
The associated eliminator is 
\begin{shaded}
\begin{coq}
list_rect : forall (A : Type) (P : list A -> Type), P nil ->  (forall (a : A) (l : list A), P l -> P (a :: l)) 
                     -> forall l : list A, P l.
\end{coq}
\end{shaded}

\newcommand\indI{\mathtt{I}}
Let us consider an inductive type $\indI:\typeOfI$, 
with constructors
$\mathtt{I\_c}_i:\typeOfci$ and elimination principle
$\mathtt{I\_rect}:\typeOfrect$.
Using the global context presentation of Section~\ref{sec:white-box-black}, the introduction of this new inductive type in our framework amounts to the ability to extend the current global context $\gCtx$ with the following triple for the type constructor
$$(\indI:\typeOfI; \indI:\typeOfI; \RTrans{\indI} : \uparamTWB{\typeOfI}{\gCtx}~\indI~\indI)$$
and similar triples for each constructor and for the eliminator.
We can then directly use the White Box FP (Corollary~\ref{white
  box}) extended with this inductive type. 
  To sum up, the addition of the triples for the inductive type $\indI$
amounts to giving the following terms\footnote{Here, and in the following, we use the notation $\heteq{a}{b}{A}{B}$ extensively to avoid explicitly mentioning the global context.}
$$
\left[
  \begin{array}{lcl}
     \RTrans{\indI} & :&
  \heteq{\indI}{\indI}{\typeOfI}{\typeOfI}
  \\
  \RTrans{\mathtt{I\_c}_i} & :&
  \heteq{\mathtt{I\_c}_i}{\mathtt{I\_c}_i}{\typeOfci}{\typeOfci}
  \\
  \RTrans{\mathtt{I\_rect}} & :&
  \heteq{\mathtt{I\_rect}}{\mathtt{I\_rect}}{\typeOfrect}{\typeOfrect}
  \end{array}
  \right.
$$
When the above terms exist, we say that the inductive type,
constructors and elimination principle are univalently parametric.

\subsection{Dependent Pairs}
\label{sec:dependent pairs}

In \CIC, dependent pairs are defined as the inductive family:
\begin{shaded}
\begin{coq}
Inductive sigT (A : Type) (B : A -> Type) : Type := 
        existT : forall x : A, B x -> sigT A B.
\end{coq}
\end{shaded}

\noindent Thus, the unique constructor of a dependent pair is \coqe{existT} and
the elimination principle is given by 
\begin{shaded}
\begin{coq}
sigT_rect : forall (A : Type) (P : A -> Type) (P_0 : sigT A P -> Type), 
       (forall (x : A) (p : P x), P_0 (x; p)) -> forall s : sigT A P, P_0 s
\end{coq}
\end{shaded}

As common, we use the notation $\Sigma a:A. \ B$ to denote \coqe{sigT A (fun a => B)}, similarly to dependent type theories where pair types are part of the syntax~\cite{MARTINLOF197573}.

\begin{proposition} \label{prop:FP_Sigma}
  There exists a term
  $$\Univ{\Sigma} :
  \heteq{\Sigma}{\Sigma}{\Pi (A:\Type_i).\ (A -> \Type_i) -> \Type_i}
  {\Pi (A:\Type_i).\ (A -> \Type_i) -> \Type_i}$$
\end{proposition}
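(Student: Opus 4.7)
The plan is to mirror the $\Pi$-case of the Abstraction Theorem (Theorem~\ref{thm:abstraction}), substituting sigma-bundling for pi-bundling throughout. Unfolding the definition of $\uparamT{\cdot}$ on the iterated $\Pi$-type signature of $\Sigma$, producing $\Univ{\Sigma}$ amounts to giving, for every $A, A' : \Type_i$, every $\UR{A} : \uparamT{\Type_i}~A~A'$, every $B : A \to \Type_i$ and $B' : A' \to \Type_i$, and every $\UR{B} : \Pi (a:A)(a':A').\ \proji{\UR{A}}~a~a' \to \uparamT{\Type_i}~(B~a)~(B'~a')$, a triple of relation, equivalence, and coherence inhabiting $\uparamT{\Type_i}~(\Sigma a:A.\ B~a)~(\Sigma a':A'.\ B'~a')$.

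For the \emph{relation}, I would take the natural pointwise pairing
$$R_\Sigma~(a;b)~(a';b') \defeq \Sigma (r : \proji{\UR{A}}~a~a').\ \proji{(\UR{B}~a~a'~r)}~b~b'.$$
For the \emph{equivalence}, I would build the sigma counterpart $\equivSigma$ of the combinator $\equivPi$ used in the proof of Theorem~\ref{thm:abstraction}, assembled from the standard HoTT lemma \texttt{equiv\_functor\_sigma}. Concretely, from $\projii{\UR{A}} : \Eq{A}{A'}$, the coherence $\projiii{\UR{A}}$, and the pointwise equivalences $\lambda a~b~r.\ \projii{(\UR{B}~a~b~r)}$, one reduces the goal to providing $\Pi a:A.\ \Eq{B~a}{B'~(\Transp{a})}$, which \texttt{equiv\_functor\_sigma} discharges. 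Unlike the $\Pi$-case, this step does not require functional extensionality, since the transport maps on sigma types act componentwise, so the section and retraction properties follow directly from those of the base and fiber equivalences.

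For the \emph{coherence} condition $\Eq{R_\Sigma~s~s'}{(s = \Transp{s'})}$ with respect to the equivalence built above, I would proceed by case analysis on $s = (a;b)$ and $s' = (a';b')$ and then invoke the standard HoTT characterization of equality in sigma types, namely $((a;b) = (a';b')) \Eqr \Sigma (p : a = a').\ p_*~b = b'$. After re-expressing the right-hand side as a sigma of dependent equalities modulo the equivalence, it matches $R_\Sigma~s~s'$ via the two coherence witnesses $\projiii{\UR{A}}$ and $\projiii{(\UR{B}~a~a'~r)}$. This coherence step is the expected main obstacle, exactly as in the proof of $\univTerm{\Pi}$: the commutativity lemmas between transport along the equivalence and transport along the dependent family must be arranged with care. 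As in the $\Pi$-case, the full chain of rewrites is best checked in the accompanying Coq formalization.
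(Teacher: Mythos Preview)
Your proposal is correct and follows essentially the same route as the paper: the same pointwise relation $R_\Sigma$, the same appeal to \texttt{equiv\_functor\_sigma} for the equivalence (with your correct observation that no functional extensionality is needed), and the same coherence argument via the characterization of equality in $\Sigma$-types combined with the component coherence witnesses. The one point where the paper's emphasis differs from yours is that the coherence step for $\Sigma$ is \emph{easier} than for $\Pi$, not merely analogous: the decomposition $(x = \Transp{y}) \Eqr \Sigma\,(p : x.1 = \Transp{y.1}).\ x.2 = \Transp{y.2}$ is provable by elimination on dependent pairs and is therefore effective, whereas its $\Pi$-counterpart is functional extensionality and hence axiomatic.
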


\begin{proof}
  The main steps of the construction are similar to those for the
  dependent product. Unfolding the definitions, giving a term
  $\Univ{\Sigma}$ amounts to giving an inhabitant of $\uparamT{\Type_i}$
  given two terms $\UR{AB} : \heteq{A}{B}{\Type_i}{\Type_i}$ and
  $\UR{PQ} : \heteq{P}{Q}{A -> \Type_i}{B -> \Type_i}$.
  
  First, the univalent relation $R_\Sigma$ between $\Sigma a:A. \  P \  a$ and $\Sigma
  b:B.  \  Q \  b$ is defined as
  $$
  R_\Sigma \defeq \lambda (p: \Sigma a:A . \ P~a) (q:\Sigma b:B.\
  Q~b). \ \Sigma (\ur{pq} : \proji{\UR{AB}} ~ \proji{p} ~{\proji{q}}). \
  \proji{(\UR{PQ} ~ \proji{p} ~ \proji{q} ~ \ur{pq})} ~ {\projii{p}} ~
  {\projii{q}}.
  $$
  It naturally requires the first and second elements of the pair to
  be related at the corresponding types.

Second, the proof that $\Eq{\Sigma a:A. \ P \ a}{\Sigma b:B. \ Q \ b}$
also follows from a standard result of HoTT, namely
\coqe{equiv_functor_sigma} in the Coq HoTT library.
Contrarily to the dependent product, which requires functional extensionality, this lemma does not require any axiom.

Finally, the proof that the relation is coherent with equality is the novel part
required by univalent parametricity. 
This means that we need to define a term:
\begin{align*}
\univTerm{\Sigma} : \ & \Pi \ (A \  B : \Type_i) \  (\UR{AB}: \heteq{A}{B}{\Type_i}{\Type_i}).
  \\  & \Pi \ (P : A -> \Type_i) \ (Q : B -> \Type_i) \
        (\UR{PQ}  : \heteq{P}{Q}{A -> \Type_i}{B -> \Type_i}).
\\ & \phantom{\Pi \ } \Pi x \  y.  \Eq{(\heteq{x}{y}{\Sigma a:A. \  P \  a}{\Sigma b:B.  \  Q \  b})}{(x={\Transp{y}})}
\end{align*}
Instead of building the equivalence explicitly with the transport functions and their associated section and retraction proofs, this equivalence can be conveniently proven by composition of equivalences. 
Specifically, we rely on a decomposition of equality for dependent sums:
$$
( \heq{x}{y} ) \equiv 
( \Sigma p :\heq{x.1}{y.1}. \ \heq{x.2}{y.2} ) \Eqr 
( \Sigma p : x.1 = \Transp{y.1}. \ x.2 = \Transp{y.2} ) \Eqr 
( x={\Transp{y}} )
$$

Note that the last equivalence above
is the counterpart of functional extensionality for dependent
function types. 
The main difference is that this equivalence is effective as it can be
proven by elimination of dependent pairs.
\end{proof}

With $\Univ{\Sigma}$ defined, we can extend the global context for the
constructor and eliminator as well.
\begin{proposition} \label{prop:FP_Sigma_rest}
  There is a term
  $$\Univ{\text{\coqe{existT}}} :
  \heteq{\text{\coqe{existT}}}{\text{\coqe{existT}}}{{\mathtt{T_{ex}}}}{\mathtt{T_{ex}}}$$
  where
  $\mathtt{T_{ex}}  \defeq
  \Pi (A : \Type) (P : A -> \Type) (x: A). \ P ~ x -> \Sigma (x :
  A). \ P ~ x$
  and similarly for \coqe{sigT_rect}.
\end{proposition}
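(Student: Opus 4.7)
The plan is to construct both witnesses by the same unfolding recipe used throughout the proof of Proposition for $\Univ{\Sigma}$, exploiting the fact that $R_{\Sigma}$ has been defined so that the ``canonical'' relation on a pair is given exactly by the (dependent) pairing of the component relations. For $\Univ{\text{\coqe{existT}}}$, I would first iteratively unfold the five nested $\Pi$-binders in $\heteq{\text{\coqe{existT}}}{\text{\coqe{existT}}}{\mathtt{T_{ex}}}{\mathtt{T_{ex}}}$ using the definition of the univalent translation on dependent products. This reduces the goal, given $\UR{AB}:\heteq{A}{B}{\Type}{\Type}$, $\UR{PQ}:\heteq{P}{Q}{A\to\Type}{B\to\Type}$, $\ur{xy}:\proji{\UR{AB}}\ x\ y$ and $\ur{pq}:\proji{(\UR{PQ}\ x\ y\ \ur{xy})}\ p\ q$, to producing an inhabitant of $R_{\Sigma}\ (x;p)\ (y;q)$. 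By the explicit definition of $R_{\Sigma}$ recalled in the preceding proof, this type unfolds to a $\Sigma$-type whose first component has type $\proji{\UR{AB}}\ x\ y$ and whose second has type $\proji{(\UR{PQ}\ x\ y\ \ur{xy})}\ p\ q$, so the witness is simply the pair $(\ur{xy};\ur{pq})$. This part is essentially definitional and should require no new lemmas.

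For $\Univ{\text{\coqe{sigT\_rect}}}$, I proceed in the same way, but the unfolding is longer because of the motive and the branch. After unfolding the $\Pi$-binders I am given related types $\UR{AB},\UR{PQ}$ as above, a related motive $\UR{P_0Q_0}$ at the $\Sigma$-types (whose relatedness is expressed using $\Univ{\Sigma}$ applied to $\UR{AB}$ and $\UR{PQ}$), a related branch $\UR{fg}$, and a related pair $\ur{ss'}:R_{\Sigma}\ s\ s'$. The goal is to show that the two eliminators produce related results at the motive applied on each side. The key step is to decompose $\ur{ss'}$: an inhabitant of $R_{\Sigma}\ s\ s'$ gives a first component $\ur{xy}$ and a second component $\ur{pq}$; combined with the $\eta$-rule for $\Sigma$-types (or a direct use of \coqe{sigT_rect}), this lets me replace $s$ by $(s.1;s.2)$ and $s'$ by $(s'.1;s'.2)$, at which point the eliminators reduce by $\iota$-conversion to the branch applied to the respective components.

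At that point, the relatedness we must produce is exactly the relatedness of the branch applied to related arguments, which is given directly by $\UR{fg}$ instantiated with $s.1, s'.1, \ur{xy}, s.2, s'.2, \ur{pq}$. The mildly subtle bookkeeping is that the two sides of the equation live at $P_0\ s$ and $Q_0\ s'$, while $\UR{fg}$ produces an inhabitant of the relation at $P_0\ (s.1;s.2)$ and $Q_0\ (s'.1;s'.2)$; these are convertible by $\eta$ for $\Sigma$, so no rewriting should be needed, but in a strict implementation one may instead transport along the coherence term $\uparamCoh{\Sigma}$.

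The main obstacle I anticipate is the \coqe{sigT_rect} case: the motive and branch types are themselves dependent on the $\Sigma$-type equivalence, so one has to be meticulous that the relational statement one obtains after unfolding really is of the shape required by the relatedness of the branch, without appealing to further axioms. This mirrors the effort spent in defining $\univTerm{\Pi}$ for dependent products, but is noticeably lighter here because, unlike for $\Pi$, the constructor of $\Sigma$ is available to destruct $s$ and $s'$ and no functional extensionality is needed; consequently, the whole construction is effective. Full details are deferred to the Coq development.
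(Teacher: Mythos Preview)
Your proposal is correct and follows essentially the same approach as the paper, which dispatches the proposition in a single line (``Direct by induction on the structure of a dependent pair type''). Your detailed unfolding for \coqe{existT} and the decomposition of $\ur{ss'}$ followed by $\iota$-reduction for \coqe{sigT\_rect} is precisely what that one-line proof amounts to, and your observation that no functional extensionality is needed is in keeping with the paper's earlier remark about dependent pairs.
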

\begin{proof}
Direct by induction on the structure of a dependent pair type.
\end{proof}

\subsection{Record Types}
\label{sec:records}

The treatment of dependent pairs above scales to dependent records, by considering their encoding as iterated dependent pairs.

To illustrate, let us consider the example of a simple library record type \coqe{Lib}, which abstracts over an indexed container type constructor \coqe{C}, and packages functions \coqe{head} and \coqe{map} together with a property on their composition:
\begin{shaded}
\begin{coq}
Record Lib (C : Type -> nat -> Type) :=
  { head : forall {A : Type} {n : nat}, C A (S n) -> A;
    map : forall {A B} (f :A -> B) {n}, C A n -> C B n;
    prop : forall n A B (f : A -> B) (v : C A (S n)), head (map f v) = f (head v)}.
\end{coq}
\end{shaded}
Like all record types, \coqe{Lib} can be formulated in terms of nested dependent pairs. 
This means that, for any \coqe{C : Type -> nat -> Type}, 
\coqe{Lib C} is equivalent to
\begin{shaded}
\begin{coq}
  Lib' C := Sigma (hd : forall A n. C A (S n) -> A),
                   Sigma (map : forall A B (f:A -> B) n, C A n -> C B n),
                       forall n A B (f : A -> B) (v : C A (S n)), hd (map f v) = f (hd v).
\end{coq}
\end{shaded}

The fact that \coqe{Lib'} is univalently parametric directly follows from the
abstraction theorem of \CC extended with dependent pairs.
To conclude that \coqe{Lib} is univalently parametric, 
we use the fact that a type
family equivalent to a univalently parametric type family is itself
univalently parametric.

This approach to establish the univalent parametricity  record type via its encoding with dependent pairs can be extended to any record type. 
We have automatized this principle in our Coq framework as a tactic, by reusing an idea used in the HoTT library that allows automated inference of type equivalence for records with their nested pair types formulation. This tactic can be used to automatically prove that a given record type is univalently parametric (provided its fields are).

\subsection{Parameterized Recursive Inductive Families}
\label{sec:paramrecinductives}

To establish the univalent parametricity of a {\em parameterless} recursive inductive type, such as natural numbers with zero and successor, we can simply use the canonical structure over the identity equivalence, with equality as univalent relation and trivial coherence.
However, whenever an inductive type has parameters, the situation is more
complex.\footnote{In this work the distinction between 
{\em parameters} and {\em indices} for inductive types is important. 
A parameter is merely indicative that the type behaves
{\em uniformly} with respect to the supplied argument. For instance $A$ in 
 \mbox{\texttt{list} $A$} is a parameter. Thus the choice of $A$ only affects the type of elements inside
the list, not its shape. In particular, by knowing $A$ for a given list, we cannot infer which constructor was used to construct the list. 
On the other hand, $n$ in \mbox{\texttt{Vect} $A\; n$}  is an index. By knowing the value of an index, one can infer which constructor(s) may or may not have been used to create the value. For instance, a value of type \mbox{\texttt{Vect} $A\;0$} is necessarily the empty vector. We address indexed inductive types in \S\ref{sec:gadts}.}

Let us develop the case of lists and define the term
$$\Univ{\mathtt{list}} :
\heteq{\mathtt{list}}{\mathtt{list}}{\Type_i -> \Type_i}
{\Type_i -> \Type_i}.$$
Unfolding the definitions, giving a term
  $\Univ{\mathtt{list}}$ amounts to exhibiting an inhabitant of $\uparamT{\Type_i}$
  given two types $A$ and $B$ related by $\UR{AB} : \heteq{A}{B}{\Type_i}{\Type_i}$.

The univalent relation on lists is given directly by
parametricity. Indeed, following the work of \citet{bernardyAl:jfp2012} on the inductive-style translation, the inductive type
corresponding to the transport of a relation between \coqe{A} and
\coqe{B} to a relation between \coqe{list A} and \coqe{list B} is
given by:

\begin{shaded}
\begin{coq}
Inductive UR_list A B (R : A -> B -> Type) : list A -> list B -> Type :=
  UR_list_nil : UR_list R nil nil
| UR_list_cons : forall a b l l',  (R a b) -> (UR_list R l l') ->  UR_list R (a::l) (b::l').
\end{coq}
\end{shaded}
 This definition captures the fact that two lists are related iff they are of the same length and pointwise-related.
Then, the univalent relation is given by 
$$
R_{\mathtt{list}} \defeq \lambda (l: \mathtt{list} \ A) (l' :
  \mathtt{list} \ B) . \ \mathtt{UR\_list} \ A \ B \ \proji{\UR{AB}} \ l \ l'
$$

Then, we need to show that 
$$
\Eq{\texttt{list}\ A}{\texttt{list}\ B}
$$
knowing that $\Eq{A}{B} $ from $\UR{AB}$.
The two transport functions of the equivalence $\Eq{\texttt{list}\ A}{\texttt{list}\ B}$ can be defined by induction on the structure of the list (\ie using the eliminator
\coqe{list_rect}). They both simply correspond to the usual map operation on lists. The proofs of the section and retraction are also direct by induction on the structure of the list, and transporting along the section and retraction of $\Eq{A}{B}$.

Similarly to dependent pairs, the proof that the relation is coherent with
equality relies on the following decomposition of equality between lists:
$$ 
\Pi \ A \ B \  (e: \Eq{A}{B}) \ l \ l' . \ \Eq{(\mathtt{UR\_list}\ A \ B  \ (\lambda
\ a \ b . a = \ \Transp{b}) \ l \ l' ) } {(l = \ \Transp{ l'})}
$$

Indeed, using this lemma, the coherence of the univalent relation with
equality is easy to infer:
$$
( \heq{l}{l'} ) \equiv
(\mathtt{UR\_list} \ A \ B \ \proji{\UR{AB}} \ l \ l') \Eqr
(\mathtt{UR\_list} \ A \ B  \ (\lambda \ a \ b . a = \ \Transp{b}) \ l \ l' ) \Eqr 
(l={\Transp{l'}} )
$$

Note that it is always valid to decompose equality on inductive types. 
This is because a value of an inductive type can only be
observed by analyzing which constructor was used to build the value.
This fact is explicitly captured by the elimination principle of an
inductive type.
On the contrary, for dependent products, the fact that functions can
only be observed through application to a term is implicit in
\CIC, \ie~there is no corresponding elimination principle in the
theory (hence functional extensionality is an axiom).

The proofs that the constructors \coqe{nil} and \coqe{cons} are
univalently parametric are direct by definition of \coqe{UR_List}.
Likewise, the proof that the eliminator \coqe{list_rect} is univalently parametric is direct by induction on \coqe{UR_List}.

\paragraph*{Generalization}

It is possible to generalize the above result, developed for lists,
to any parameterized inductive family. As illustrated above, the univalent
relation for parameterized inductive families is given by
parametricity, and the proof that related inputs give rise to
equivalent types proceeds by a direct induction on the structure of
the type. The main difficulty is to generalize the proof of the
coherence of the relation with equality. Indeed, this involves fairly
technical reasoning on equality and injectivity of constructors.

Fortunately, in practice in our Coq framework, a general
construction is not required to handle each new inductive type,
because a witness of the fact that a given inductive is univalently parametric can be defined specifically as a typeclass instance.
We also provide a tactic to automatically generate this proof on any parameterized datatype (up to a fixed number of constructors), depending on the univalent parametricity of its parameters.

\subsection{Indexed Inductive Families}
\label{sec:gadts}
CIC supports the definition of inductive types that are not only parameterized, but also indexed, like length-indexed vectors $\mathtt{Vector} \ A \ n$. 
Another mainstream example is Generalized
Algebraic Data Types (GADTs) \cite{peytonJonesAl:icfp2006} illustrated here with the typical application to modeling typed expressions:
\begin{shaded}
\begin{coq}
Inductive Expr : Type -> Type :=
| I : nat -> Expr nat
| B : bool -> Expr bool
| Ad : Expr nat -> Expr nat -> Expr nat
| Eq : Expr nat -> Expr nat -> Expr bool.
\end{coq}
\end{shaded}
Observe that the return types of constructors instantiate the inductive
family at specific type indices, instead of uniform type parameters as is the case for
\eg~the parameterized list inductive type. This specificity of constructors is exactly what 
makes GADTs interesting for certain applications; but this is
precisely why their univalent parametricity is ineffective!

Indeed, consider the equivalence between natural numbers \coqe{nat} and binary natural numbers \coqe{N}. Univalent parametricity of the \coqe{Expr} GADT means that \coqe{Expr nat} is equivalent to \coqe{Expr N}. However, there is no constructor for \coqe{Expr} that can produce a value of type \coqe{Expr N}. So the only way to obtain such a term is by using an {\em equality} between \coqe{nat} and \coqe{N}, that is, using the univalence axiom.\footnote{It is however impossible to {\em prove} that no term of type \textsf{Expr Bin} can be constructed without univalence, because the univalence axiom is compatible with \CIC.}

The challenge is that univalent parametricity for indexed inductive families relies on
the coherence condition. To better understand this point, let us study
the prototypical case of identity types.

\paragraph{Identity types}

In Coq, the identity (or equality) type $\mathtt{eq}$, with
notation $=$, is defined as an indexed 
inductive family with a single constructor \coqe{eq_refl}:
\begin{shaded}
\begin{coq}
Inductive eq (A : Type) (x : A) : A -> Type :=  eq_refl : x = x.
\end{coq}
\end{shaded}
 The elimination principle \coqe{eq_rect}, known as {\em path induction} in HoTT terminology, is:
\begin{shaded}
\begin{coq}
eq_rect : forall A (x : A) (P : forall a : A, x = a -> Type), P x eq_refl -> forall (y : A) (e : x = y), P y e
\end{coq}
\end{shaded}

\begin{proposition} \label{prop:FP_Eq}
  There is a term
  $$\Univ{\mathtt{eq}} : \heteq{\mathtt{eq}}{\mathtt{eq}}{\Pi \ (A :
    \Type) \ (x : A). \ A -> \Type} {\Pi \ (A :
    \Type) \ (x : A). \ A -> \Type}.$$
\end{proposition}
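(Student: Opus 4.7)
The plan is to construct $\Univ{\mathtt{eq}}$ by providing, for any univalently related $\UR{AB} : \heteq{A}{B}{\Type}{\Type}$, $\UR{xx'} : \heteq{x}{x'}{A}{B}$, and $\UR{yy'} : \heteq{y}{y'}{A}{B}$, the three components of an element of $\uparamT{\Type}\ (x =_A y)\ (x' =_B y')$: a relation $R_{\mathtt{eq}}$, an equivalence, and a coherence witness. The template is the same as for Propositions~\ref{prop:FP_Sigma} and the list case, but the presence of indices makes the construction more intricate.

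For the relation, I would use the coherence of $\UR{AB}$ to reinterpret $\UR{xx'}$ as a path $q_x : x = \Transp{x'}$ in $A$ (and similarly $q_y : y = \Transp{y'}$). I then define $R_{\mathtt{eq}}\ p\ p'$ as the proposition stating that the square built out of $p$, $p'$, $q_x$ and $q_y$ commutes, \emph{e.g.}\ $q_x^{-1} \cdot p \cdot q_y$ equals the $\Transp{}$-image of $p'$. For the equivalence $\Eq{(x =_A y)}{(x' =_B y')}$, I would compose two standard equivalences: first, transporting endpoints via $q_x$ and $q_y$ gives $(x =_A y) \Eqr (\Transp{x'} =_A \Transp{y'})$; second, since $\Transp{}$ is an equivalence, its action on paths gives $(\Transp{x'} =_A \Transp{y'}) \Eqr (x' =_B y')$. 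This construction is effective and requires no axiom beyond those already assumed for $\UR{AB}$.

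For the coherence condition, I would proceed as in Proposition~\ref{prop:FP_Sigma}, decomposing equality of paths step by step: using path induction on $q_x$ and $q_y$ reduces the statement to the degenerate case where the endpoints coincide, at which point $R_{\mathtt{eq}}\ p\ p'$ simplifies to $p = \text{ap}(\Transp{})^{-1}(p')$, which is exactly the statement that $p$ equals the transport of $p'$ along the built equivalence. The univalent parametricity of the constructor $\mathtt{eq\_refl}$ then follows because $R_{\mathtt{eq}}\ \mathtt{eq\_refl}_x\ \mathtt{eq\_refl}_{x'}$ reduces to a tautological path-algebra identity on $q_x$ (with $y = x$ and $y' = x'$). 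For \coqe{eq_rect}, univalent parametricity is obtained by path induction on the given equality, combined with the univalent parametricity of the motive $P$.

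The main obstacle is the coherence step. Unlike the case of parameterized inductive families where the proof reduces to a clean injectivity-of-constructors argument, here the indices of $\mathtt{eq}$ vary nontrivially across the equivalence, so the coherence statement must track paths \emph{over} paths. Working this out explicitly without invoking univalence to trivialize the argument requires careful use of standard HoTT path-algebra lemmas (whiskering, naturality of transport along equivalences, the ``$\text{ap}(\Transp{})$ is an equivalence'' lemma), in much the same spirit as the technical reasoning invoked for the dependent pair case. In the \Coq formalisation I would expect this proof to consist largely of applications of equivalences from the HoTT library followed by a single \coqe{destruct} of the reflexivity evidence.
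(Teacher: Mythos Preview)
Your construction is correct, but it departs from the paper's in the choice of the relation $R_{\mathtt{eq}}$. The paper follows the general recipe for inductive types and takes $R_{\mathtt{eq}}$ to be the parametricity-derived inductive family \texttt{UR\_eq} (the relational lifting of \texttt{eq} in the style of \citet{bernardyAl:jfp2012}), with its single constructor \texttt{UR\_eq\_refl}. Only in the coherence step does the paper establish the equivalence
$\mathtt{UR\_eq}\ A\ B\ P\ x\ y\ H\ x'\ y'\ H'\ X\ Y \;\Eqr\; (\transEq{Y}{\;(\transEq{X}{H})} = H')$,
which is precisely your square-commutativity condition expressed with transport rather than path composition. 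You instead take this square condition as the \emph{definition} of $R_{\mathtt{eq}}$ from the outset, bypassing the auxiliary inductive.

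Both routes are sound and ultimately prove the same thing. The paper's choice buys uniformity: it instantiates the general pattern for indexed inductives, and the univalent parametricity of \texttt{eq\_refl} and \texttt{eq\_rect} then follow immediately from \texttt{UR\_eq\_refl} and the eliminator of \texttt{UR\_eq}, with no path algebra required. Your direct definition avoids introducing an auxiliary inductive and makes the coherence step almost tautological, but shifts the work to the constructor case, where you must check a path-algebra identity by hand. One small caution: your ``path induction on $q_x$ and $q_y$'' only goes through after first generalising the endpoints $\Transp{x'}$ and $\Transp{y'}$ so that one endpoint of each path is free; this is standard, but worth making explicit.
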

\begin{proof}

  Unfolding the definition, this amounts to defining an inhabitant of
  $\uparamT{\Type}$ given two types $A$, $B$ related by
  $\UR{AB} : \heteq{A}{B}{\Type}{\Type}$, two terms $a$ in $A$, $b$ in
  $B$ related by $e: \heteq{a}{b}{A}{B}$ and two terms $a'$ in $A'$,
  $b'$ in $B$ related by $e': \heteq{a'}{b'}{A}{B}$.
  
The univalent relation for identity types is defined using the inductive type obtained by applying parametricity to the identity type: 
\begin{shaded}
\begin{coq}
Inductive UR_eq (A_1 A_2 : Type) (A_R : A_1 -> A_2 -> Type) (x_1 : A_1) (x_2 : A_2) (x_R : A_R x_1 x_2) :
   forall (y_1 : A_1) (y_2 : A_2), A_R y_1 y_2 -> x_1 = y_1 -> x_2 = y_2 -> Type :=
   UR_eq_refl : UR_eq A_1 A_2 A_R x_1 x_2 x_R x_1 x_2 x_R eq_refl eq_refl.
\end{coq}
\end{shaded}
The univalent relation is just a specialization of \texttt{UR\_eq}
where $\mathtt{A}_R$ is given by the relation induced by $\UR{AB}$ and
with $e$ and $e'$, so
we set:
\begin{align*} 
R_\mathtt{eq} &\defeq \lambda (e_1 : a =_A a') (e_2 : b =_B b') . \ 
  \mathtt{UR\_eq} \ A \ B \ \proji{\UR{AB}} \ a \ b \ e \ a' \ b' \ e' \ e_1 \ e_2 
\end{align*}

To prove that $\Eq{(a = a')}{(b = b')}$, it is necessary to use the
coherence between the relation and the equivalence provided by
$\UR{AB}$.
After rewriting, the equivalence to establish is
$$
\Eq{(a =_A a')}{(\Transp{a} =_B \Transp{a'})}
$$
This equivalence is again similar to a standard result of HoTT, namely
\coqe{equiv_functor_eq} in the Coq HoTT
library.

Finally, proving that the relation is coherent with equality amounts to show that 
$$
 \Pi (e_1 \ e_2 : a = a') . \ \Eq{(e = e')}
  {\mathtt{UR\_eq} \ A \ B \ \proji{\UR{AB}} \ a \ b  \ e  \ a' \ b' \
    e' \ e_1 \Transp{e_2}}
$$
This can be done by first showing the following equivalence\footnote{The notation $\transEq{e}{t}$, with $e:x=y$ and $t : P \ x$ when $P$ 
is clear from the context, denotes the transport of the term $e$ through the  equality proof $e$ (hence $\transEq{e}{t} : P \ y$).}
$$\mathtt{UR\_eq} \ A \  B \ P \ x \ y \ H \ x' \ y' \ H' \ X \ Y
\Eqr
 (\transEq{Y}{\;(\transEq{X}{H})} = H')$$
which means that the naturality square between $H$ and $H'$ commutes. 
\end{proof}

The proofs that \coqe{eq_refl} and \coqe{eq_rect} are univalently parametric are
direct by \texttt{UR\_eq\_refl} and elimination of $\mathtt{UR\_eq}$.

To deal with other indexed inductive types, one can follow a similar approach. Alternatively, it is possible to exploit the correspondence between an indexed inductive family and a subset of parameterized
inductive family, established by \citet{gambino2004wellfounded}, to prove the univalence of an indexed inductive family. In this correspondence, 
the property of the subset type is obtained from
the identity type.

For instance, for vectors:
$$
\Eq{\mathtt{Vector} \ A \ n}{\Sigma \ l : \mathtt{list} \ A. \ \mathtt{length} \ l = n}
$$
The \coqe{length} function computes the length of a
list, as follows:
\begin{shaded}
\begin{coq}
Definition length {A} (l: list A) : nat := list_rect A (fun _ => nat) O (fun _ l n => S n) l
\end{coq}
\end{shaded}
\noindent
where one can observe that the semantics of the index in the different constructors of vectors is captured in the use of the recursion principle \coqe{list_rect}.
By the abstraction theorem,
\mbox{$\Sigma \ l : \mathtt{list} \ A. \ \mathtt{length} \ l = n$} is
univalently parametric, and thus so is $\mathtt{Vect} \ A \ n$.

\section{Univalent Parametricity in Coq}
\label{sec:coq}

The whole development of univalent parametricity exposed in this
article has been formalized and implemented in the Coq 
proof assistant~\cite{Coq:manual}, reusing several constructions from the HoTT library
\cite{Bauer:2017:HLF:3018610.3018615}. We have formalized in \Coq the 
univalent parametricity translation, in order to mechanically verify the 
content from \S\ref{sec:univparam}---we do not discuss this effort here. 
Instead, we present the shallow embedding of the univalent relation in \Coq, based on typeclass instances to define and automatically derive the univalent parametricity proofs of Coq constructions. This framework brings the benefits of univalent parametricity to standard \Coq developments.

We first introduce the core classes of the
framework (\S\ref{sec:coq-framework}), and then describe the
instances for some type constructors
(\S\ref{sec:coq-constructors}). 
We explain how the use of typeclasses gives rise to
a direct implementation of univalent transport {\`a} la carte (\S\ref{sec:tr-carte}).
Finally, we discuss several refinements to the framework to circumvent the limitation of relying on the univalence axiom in Coq (\S\ref{sec:effectiveness}).

\subsection{Coq Framework}
\label{sec:coq-framework}

The central notion at the heart of this work is that of type equivalences, which we formulate as a typeclass to allow automatic inference of equivalences:\footnote{Adapted
  from:
  \url{http://hott.github.io/HoTT/coqdoc-html/HoTT.Overture.html}.}
\begin{shaded}
\begin{coq}
Class IsEquiv (A B : Type) (f : A -> B) := {
  e_inv : B -> A;
  e_sect : forall x, e_inv (f x) = x;
  e_retr : forall y, f (e_inv y) = y;
  e_adj   : forall x, e_retr (f x) = ap f (e_sect x) }.
\end{coq}
\end{shaded}
 The properties \coqe{e_sect} and \coqe{e_retr} express that
\coqe{e_inv} is both the left and right inverse of \coqe{f}, respectively.
The property \coqe{e_adj} is a compatibility
condition between the proofs. It ensures that
the equivalence is uniquely determined by the function \coqe{f}. 

While \coqe{IsEquiv} characterizes a particular function $f$ as being an equivalence, we say that two types $A$ and $B$ are equivalent, noted $\Eq{A}{B}$, iff there exists such a function $f$.
\begin{shaded}
\begin{coq}
Class Equiv A B := { e_fun :> A -> B ; e_isequiv : IsEquiv e_fun }.
Infix "≃" := Equiv.
\end{coq} 
\end{shaded}
\coqe{Equiv} is here defined as a typeclass to allow automatic inference of equivalences. This way, we can define automatic univalent transport as follows:
\begin{shaded}
\begin{coq}
Definition univalent_transport {A B : Type} {e : A ≃ B} : A -> B := e_fun e.  
Notation "upaBB" := univalent_transport.
\end{coq}
\end{shaded}
\noindent where the equivalence is obtained through typeclass instance resolution, \ie~proof search. Note also that the source and target types of transport are implicitly resolved by default.

To formalize univalent relations, we define a hierarchy of classes, starting from \coqe{UR} for univalent relations (arbitrary heterogeneous relations), refined by \coqe{UR_Coh}, which additionally requires the proof of coherence between a univalent relation and equality.

\begin{shaded}
\begin{coq}
Class UR A B := { ur : A -> B -> Type  }.
Infix "≈" := ur.

Class UR_Coh A B (e : Equiv A B) (H : UR A B) := { ur_coh : forall (a a' : A), Equiv (a = a') (a ≈ upaBB a') }.
\end{coq}
\end{shaded}
 The attentive reader will notice that the definition of the coherence condition above is dual to the one stated in Figure~\ref{fig:univ}. Both definitions are in fact equivalent.\footnote{See lemma \coqe{is_equiv_alt_ur_coh} in the Coq development.} The reason for adopting this dual definition is that it eases the definition of new instances.

As presented in Figure~\ref{fig:univ}, two types are related by the
univalent parametricity relation if they are equivalent and there is a
coherent univalent relation between them. 
This is captured by the typeclass \coqe{UR_Type}. 

\begin{shaded}
\begin{coq}
Class UR_Type A B := { 
    Ur :> UR A B;
    equiv :> A ≃ B;
    Ur_Coh :> UR_Coh A B equiv Ur;
    Ur_Can_A :> Canonical_eq A;
    Ur_Can_B :> Canonical_eq B }.
Infix "papillon" := UR_Type.
\end{coq}
\end{shaded}
 The last two attributes are part of the Coq framework in order to better support extensibility and effectiveness, as will be described in \S\ref{sec:canon-equal-types}.

\subsection{Univalent Type Constructors}
\label{sec:coq-constructors}

The core of the development is devoted to the proofs that standard type
constructors are univalently parametric, notably $\Type$ and $\Pi$. In terms of the Coq framework, this means providing \coqe{UR_Type} instances relating each constructor to itself. These instance definitions follow directly the proofs discussed in \S\ref{sec:univparam}.

For the universe $\Type_i$, we define:

\begin{shaded}
\begin{coq}
Instance UR_Type_def@{i j} : UR@{j j j} Type@{i} Type@{i} := {| ur := UR_Type@{i i i} |}.
\end{coq}
\end{shaded}

This is where our fixpoint construction appears: the relation at
$\Type_i$ is defined to be \coqe{UR_Type} itself. So, for a type to be
in the relation means more than mere equivalence: we also get a relation
between elements of that type that is coherent with equality. This
\coqe{UR_Type_def} instance will be used implicitly everywhere we use
the notation \coqe{X ≈ Y}, when \coqe{X} and \coqe{Y} are types
themselves.

For dependent function types, we set:
\begin{shaded}
\begin{coq}
Definition UR_Forall A A' (B : A -> Type) (B' : A' -> Type) (dom : UR A A')
  (codom: forall x y (H : x ≈ y), UR (B x) (B' y)) : UR (forall x, B x) (forall y, B' y) :=
    {| ur := fun f g => forall x y (H: x ≈ y), f x ≈ g y |}.
\end{coq}
\end{shaded}

The univalent parametricity relation on dependent function types expects
relations on the domain and codomain types, the latter being
parameterized by the former through its argument \coqe{(H : x ≈ y)}.
The definition is the standard extensionality principle on dependent
function types.

Then, we need to show that the universe is related to itself
according to the relation on the universe at one level above.
This corresponds to Proposition~\ref{prop:FP_Type}, and thanks to
both universe polymorphism and implicit management of universe
levels in \Coq, there is no need to mention levels at all (note that
\coqe{Type ≈ Type} in the definition actually computes to \coqe{Type ⋈ Type}):\footnote{ 
Some universe annotations appear in the Coq source files in order to explicitly validate our assumptions about universes.}

\begin{shaded}
\begin{coq}
Definition FP_Type : Type ≈ Type.
\end{coq}
\end{shaded}

Regarding the dependent product, the \coqe{Equiv} instance that needs
to be defined in order to show that it is univalently parametric has the following type:

\begin{shaded}
\begin{coq}
Instance Equiv_forall : forall (A A' : Type) (e_A : A ≈ A') (B : A -> Type) 
    (B' : A' -> Type) (e_B : B ≈ B'),  (forall x : A, B x) ≃ (forall x : A', B' x).
\end{coq}
\end{shaded}

While the conclusion is an equivalence, the assumptions \coqe{e_A} and
\coqe{e_B} are about univalent relations for \coqe{A}, \coqe{A'} and
\coqe{B} and \coqe{B'}. The first one is implicitly resolved as the
\coqe{UR_Type_def} defined above, and the second one as a combination
of \coqe{UR_Forall} and \coqe{UR_Type_def}. With these stronger
assumptions, and because \coqe{≈} is heterogeneous, we can prove the
equivalence without introducing transports.
This is key to make the typeclass instance proof search tractable: it
is basically structurally recursive on the type indices. We can then
show that the dependent function type seen as a binary type
constructor is related to itself using the univalent relation and
equivalence constructed above, and the coherence proof
$\univTerm{\Pi}$ presented in the Abstraction
Theorem (Th.~\ref{thm:abstraction}):

\begin{shaded}
\begin{coq}
Definition FP_forall : (fun A B => (forall x : A , B x)) ≈ (fun A' B' => (forall x : A', B' x)).
\end{coq}
\end{shaded}
To instrument the typeclass instance proof search, we add proof
search \coqe{Hint}s for each fundamental property.

We proceed similarly for other constructors: dependent pairs, the identity type, natural numbers and booleans with the canonical univalent relation, where we additionally prove the fundamental property for the eliminators. That is to say, we have many fundamental property lemmas such as:  
\begin{shaded}
\begin{coq}
Definition FP_Sigma : @sigT ≈ @sigT.
\end{coq}
\end{shaded}

Having spelled out the basics of the Coq framework for univalent parametricity, we can now turn to the practical issue of effective transport. 

\subsection{Univalent Transport {\`a} la Carte}
\label{sec:tr-carte}

Because the computation of the univalent parametricity relation is done using type
class resolution, the framework is already set up to support extension
with contexts of univalently-related constants, using typeclass instances.
For example, to extend the context with the fact that two types \coqe{A}
and \coqe{B} are univalently related, one needs to
provide a proof that there is an equivalence between \coqe{A} and \coqe{B}, and
declare it as a typeclass instance. Then, using for example the
canonical relation \mbox{\coqe{fun (a : A) (b : B) => a = upaBB b}}, one can
construct an instance of \coqe{A ⋈ B}.
The user also needs to define an instance of \coqe{B ⋈ A} using the symmetry
of the relation, because typeclass resolution cannot be instrumented
with a general rule for symmetry, otherwise proof search would never
terminate.
This way, typeclass resolution is able to automatically derive
further instances of the relation based on this univalent relation instance.

Note that the relation between \coqe{A} and \coqe{B} does not have to
be the canonical relation. For instance, coming back to the example of
the equivalence between sized lists and vectors, the relation between
sized lists and vectors can be based either on equality (plus transport
using the equivalence), on the relation \coqe{UR_list} (plus transport
using the equivalence), on a similar relation on vectors (plus transport
using the equivalence) or even on a heterogeneous relation directly
relating sized lists and vectors (without the use of transport).
Of course, all these definitions are equivalent because of the
coherence condition of univalent parametricity, but they can have
different computational content and a user can favor one or the
other, depending on the application in mind. 

Going one step further, one can show that two functions defined on
univalently-related types are in univalent relation. Let us say for
example that \coqe{f : A -> A -> A} and \coqe{g : B -> B -> B} are in
univalent relation, with witness \coqe{univrel_fg}.
Then, to exploit this univalent relation to perform univalent transport
\`a la carte, the user needs to add the following \coqe{Hint} to the proof
search database:\footnote{We use hint declarations to have precise
  control over the shape of goals a typeclass instance should solve and how.}
\begin{shaded}
\begin{coq}
Hint Extern 0 (f _ _ ≈ g _ _) => eapply univrel_fg : typeclass_instances.
\end{coq}
\end{shaded}
\noindent This declaration amounts to extending the global context of univalently-related constants with \coqe{f} and \coqe{g}.

In general, when adding new instances to univalent parametricity, the
user needs to define corresponding \coqe{Hint}s to enable automatic
typeclass resolution.
\S\ref{sec:extended-example} explains in more details how the
examples of \S\ref{sec:univ-param-acti} are realized, including such hints.

\subsection{Effectiveness of Univalent Parametric Transport in Coq}
\label{sec:effectiveness}

The proofs of univalent parametricity we have developed in \S\ref{sec:univparam} are in a setting where univalence is realized as an axiom (\CC and CIC). The axiomatic nature of the development manifests as follows:
\begin{enumerate}
\item The univalence axiom proper is used to show the coherence condition of univalent parametricity for the universe (\S\ref{sec:univ_type}). This is to be expected and unavoidable, as this condition for the universe exactly states that type equivalence coincides with equality. 
\item Functional extensionality (an axiom in CIC, which follows from univalence) is used to show that the transport functions of the equivalence for the dependent product form an equivalence (\S\ref{sec:translation}).
\end{enumerate}

Additionally, as shown in \S\ref{sec:inductives}, the effectiveness of univalent transport for inductive types depends on the type of parameters and indices. In particular, proving univalent parametricity of indexed families requires using the coherence condition.

To see how this relates to practice, consider the case of functions. Functional extensionality is only used in the proof that the transport functions form an equivalence. In particular, this means that the transport functions themselves {\em are} effective. Therefore, when transporting a first-order function, the resulting function is effective. 

For the axiom to interfere with effectiveness, we need to consider a {\em higher-order} function, \ie~that takes another function as argument. 
Consider for instance the conversion of a higher-order dependent function \coqe{g} operating on a function over natural numbers
\begin{shaded}
\begin{coq}
  g : forall (f : nat -> nat), Vector nat (f O)
\end{coq}
\end{shaded}
 to one operating on a function over binary natural numbers
\begin{shaded}
\begin{coq}
 g' : forall (f : N -> N), Vector N (f upaBB O) := upaBB g.
\end{coq}
\end{shaded}
We transport \coqe{g} to \coqe{g'} along the equivalence between the two higher-order types above. Such a transport uses, {\em in a computationally-relevant position}, the fact that the function argument \coqe{f} can be transported along the equivalence between \coqe{nat -> nat} and \coqe{N -> N}. Consequently, the use of functional extensionality in the equivalence proof chimes in, and \coqe{g'} is not effective. (Specifically, \coqe{g'} pattern matches on an equality between natural numbers that contains the functional extensionality axiom.)

Fortunately, there are different ways to circumvent this problem, by
exploiting the fact that univalent parametricity is specializable through specific typeclass
instances. This section shows how we can further specialize proofs of
univalent parametricity in situations where using axioms can be
avoided. Sometimes we can ignore the fact that an equality proof might
be axiomatic by automatically crafting a new one that is axiom-free
(\S\ref{sec:canon-equal-types}), or we can avoid transporting
type families with (potentially axiomatic) proofs of equality in some
 cases (\S\ref{sec:transportable}).

\subsubsection{Canonical Equality for Types with Decidable Equality}
\label{sec:canon-equal-types}

Any proof of equality between two natural numbers can be turned into a canonical, axiom-free proof using decidability of equality 
on natural numbers.
In general, decidable equality on a type \coqe{A} can be expressed in type
theory as
\begin{shaded}
\begin{coq}
Definition DecEq (A : Type) := forall x y : A, (x = y) + nnot(x = y).
\end{coq}
\end{shaded}%
Hedberg's theorem~\cite{hedberg:jfp1998} implies that if \coqe{A} has decidable equality, then \coqe{A} satisfies Uniqueness of Identity Proofs (UIP): 
any two proofs of the same equality between elements of \coqe{A} are
equal.
Hedberg's theorem relies on the construction of a canonical equality
to which every other is shown equal. Specifically, when \coqe{A} has a decidable equality, it is possible to define a function
\begin{shaded}
\begin{coq}
Definition Canonical_eq_decidable A (Hdec : DecEq A) : forall x y : A, x = y -> x = y :=
  fun x y e => match Hdec x y with
               | inl e0 => e0
               | inr n => match (n e) with end end.
\end{coq}
\end{shaded}%
This function produces an equality between two terms \coqe{x} and
\coqe{y} of type \coqe{A} by using the decision procedure \coqe{Hdec},
independently of the equality \coqe{e}.
In the first branch, when \coqe{x} and \coqe{y} are equal, it returns the 
canonical proof produced by \coqe{Hdec}, instead of propagating the input (possibly-axiomatic) proof \coqe{e}. And in case the decision procedure returns an inequality proof (of type \coqe{x=y -> False}), the function uses \coqe{e} to establish the contradiction.
In summary, the function transforms any equality into a canonical equality by using the input equality \emph{only in cases that are not possible}.

We can take advantage of this insight to ensure effective transport on indices of types with decidable equality. The general idea is to extend the relation on types $\heteqr{A}{B}$ to also include two functions
\coqe{forall x y : A , x = y -> x = y} and 
\coqe{forall x y : B , x = y -> x = y}. For types with decidable equality, these functions can exploit the technique presented above, and for others, these are just the identity.
However, care must be taken: we cannot add arbitrary new computational content
to the relation; we have to require that these functions preserve reflexivity. This is specified in the following class: 
\begin{shaded}
 \begin{coq}
 Class Canonical_eq (A : Type) :=
   { can_eq : forall (x y : A), x = y -> x = y ;
     can_eq_refl : forall x, can_eq x x eq_refl = eq_refl }.
\end{coq}
\end{shaded} %
\noindent which is used for the last two attributes of the \coqe{UR_Type} class given in \S\ref{sec:coq-framework}.

There are two canonical instances of \coqe{Canonical_eq}, the one that 
is defined on types with decidable equality, and exploits the technique above, and the default one, which is given by the identity function (and proof by reflexivity).

Using this extra information, it is possible to improve the definition
of univalent parametricity by always working with canonical
equalities. This way, equivalences for inductive types whose indices are of types with decidable equality---like length-indexed vectors and many common examples---never get stuck on rewriting of indices.

\subsubsection{Canonically-Transportable Predicates}
\label{sec:transportable}

As mentioned in the introduction of this section, for some predicates, 
it is not necessary to pattern match on equality to implement transport.

The simplest example is when the predicate does not actually depend on
the value, in which case \coqe{P x ≃ P y} can be implemented by the
identity equivalence because \coqe{P x} is convertible to \coqe{P y},
independently of what \coqe{x} and \coqe{y} are. 
It is also the case when the predicate is defined on a type with a
decidable equality, so we can instead pattern match on the canonical
equality (\S\ref{sec:canon-equal-types}). 

To take advantage of this situation whenever possible, we introduce
the notion of \emph{transportable} predicates.
\begin{shaded}
\begin{coq}
Class Transportable {A} (P : A -> Type) := {
    transportable :> forall x y, x = y -> P x ≃ P y;
    transportable_refl : forall x, transportable x x eq_refl = Equiv_id (P x) }.
\end{coq}
\end{shaded}%
Note that as for \coqe{Canonical_eq}, we need to require that
\coqe{transportable} behaves like the standard transport of equality
by sending reflexivity to the identity equivalence.  

For instance, the instance for constant type-valued functions is
defined as 
\begin{shaded}
\begin{coq}
Instance Transportable_cst A B : Transportable (fun _ : A => B) := {|
    transportable := fun (x y : A) _ => Equiv_id B;
    transportable_refl := fun x : A => eq_refl |}.
\end{coq}
\end{shaded}
To propagate the information that every predicate (a.k.a. type family)
comes with its instance of \coqe{Transportable}, we specialize the
definition of \coqe{UR (A -> Type) (A' -> Type)}:
\begin{shaded}
\begin{coq}
Class URForall_Type_class A A' {dom : UR A A'}  (P : A -> Type) (Q : A' -> Type) :=
  { transport_ :> Transportable P; ur_type :> forall x y (H:x ≈ y), P x ⋈ Q y }.

Definition URForall_Type A A' {HA : UR A A'} : UR (A -> Type) (A' -> Type) :=
    {| ur := fun P Q => URForall_Type_class A A' P Q |}.
\end{coq}
\end{shaded}%
This definition says that two predicates are in relation whenever they are
in relation pointwise, and when \coqe{P} is transportable.

Using \coqe{Transportable}, we can instrument the definition of
univalent relation on dependent products to improve effectiveness.
More precisely, in the definition of the inverse function that defines
the equivalence
\coqe{(forall x : A, B x) ≃ (forall x : A', B' x)}
we use the fact that \coqe{B} is transportable to change the dependency
in \coqe{B} instead of pattern matching on the equality between the
dependencies.
This is possible because from \coqe{e_B : B ≈ B'}, we know that
\coqe{B} is transportable (thanks to the specialized definition
\coqe{URForall_Type}).

\section{Univalent Parametricity in Action: Explained}
\label{sec:extended-example}

We now come back to the examples of \S\ref{sec:univ-param-acti},
explaining the definitions and adjustments of the typeclass resolution
mechanism necessary to achieve seamless transport {\`a} la carte.

\paragraph{Adding a univalent relation in the global context}
To declare that unary and binary natural numbers are univalently
related, one first needs to provide a proof \coqe{IsEquiv_of_nat} that
the transport function \coqe{N.of_nat} from \coqe{nat} to \coqe{N} is
actually an equivalence, and declare it as a typeclass instance.
\begin{shaded}
\begin{coq}
    Instance Equiv_nat_N : nat ≃ N := BuildEquiv nat N N.of_nat IsEquiv_of_nat.
\end{coq}
\end{shaded}
Then, using for example the canonical relation 
\coqe{fun (n:N) (m:nat) => n = upaBB m} to define the univalent relation 
\coqe{UR_nat_N}
between \coqe{nat} and \coqe{N}, the only remaining piece is the proof
of the following coherence condition, which can easily be done using
the section of the equivalence.
\begin{shaded}
\begin{coq}
Definition coherence_nat_N : forall a a' : nat, (a = a') ≃ (a = N.to_nat (N.of_nat a')).
\end{coq}
\end{shaded}
Then, one can define an instance of \coqe{nat ⋈ N}.
\begin{shaded}
\begin{coq}
  Instance univrel_nat_N : nat ⋈ N :=
  {| equiv := Equiv_nat_N;
     Ur := UR_nat_N;
     Ur_Coh := {| ur_coh := coherence_nat_N |}; |}
\end{coq}
\end{shaded}
This way, typeclass resolution is able to automatically derive
further instances of the relation based on this basic univalent
relation, emulating the White Box FP of Corollary~\ref{white box}.
For instance, \Coq can automatically infer the relation between \coqe{nat -> nat -> nat} and
\coqe{N -> N -> N} because it is equal to \coqe{upaWB (nat -> nat -> nat)}: 
\begin{shaded}
\begin{coq}
    Goal nat -> nat -> nat ⋈ N -> N -> N.
      typeclasses eauto.
    Qed.
\end{coq}
\end{shaded}

\paragraph{Transport {\`a} la carte}
However, the fact that \coqe{nat} and \coqe{N} are in univalent
relation alone only provides black-box transport on functions
manipulating integers. To get more efficient transport, one can do an
additional proof effort in order to also get some white-box transport,
exploiting the relation between some particular functions. For
instance, one can prove that the addition functions on unary and
binary numbers are univalently related:
\begin{shaded}
\begin{coq}
Definition  univrel_add : plus ≈ plus_N.
\end{coq}
\end{shaded}
This amounts to show that for every \coqe{n m : nat}, we have
\coqe{plus n m = upaBB (plus_N (upaBB n) (upaBB m))}.
The proof can be done by induction on \coqe{n}. The \coqe{O} case
requires showing that \coqe{plus_N N0 m = m} for every \coqe{m : N},
which is true by computation. The \coqe{S} case requires showing that
\coqe{plus_N (NS n) m =} \mbox{\coqe{NS (plus_N n m)}} for every \coqe{n m: Bin}. This
property is more complex to prove because it must be done by induction
on \coqe{n} and the definition of \coqe{NS} does not comply very well
with the binary structure.

Next, to add this relation to the global context, we need to
instrument typeclass resolution by defining the following hint, which
will be used when looking for a function in relation with the
\coqe{plus} function:
\begin{shaded}
\begin{coq}
Hint Extern 0 (plus _ _ ≈ _) => eapply univrel_add : typeclass_instances.
\end{coq}
\end{shaded}
With this hint, the system is able to automatically infer that the
types \coqe{forall n m : nat, n + m = m + n} and
\coqe{forall n m : N, n + m = m + n} are in univalent relation using
white-box transport. For instance:
\begin{shaded}
\begin{coq}
Goal (forall n m : nat, n + m = m + n) ⋈ (forall n m : N, n + m = m + n).
  typeclasses eauto.
Qed.
\end{coq}
\end{shaded}
It is therefore possible to automatically transport the {\em proof term} of the
commutativity of \coqe{plus} to a proof term of the commutativity of
\coqe{plus_N} using the black-box transport provided by this univalent relation.
\begin{shaded}
\begin{coq}
Definition plus_N_comm : forall n m : N, n + m = m + n := upaBB plus_comm.
\end{coq}
\end{shaded}

\paragraph{Transporting goals}
As explained in \S\ref{sec:univ-param-acti}, univalent
parametricity can also be used to prove
properties by computation using an alternative representation that is more adequate computationally. For instance with the polynomial
\coqe{poly}, the proof that \coqe{poly 50} is bigger than \coqe{1000}
can be done by moving to an equivalent property on binary natural
numbers first, and then solving the goal by computation and basic inversion. 
\begin{shaded}
\begin{coq}
Goal poly 50 >= 1000.
  replace_goal; now compute.
Defined.
\end{coq}
\end{shaded}
The tactic \coqe{replace_goal} proceeds by first asserting
that there exists a property \coqe{opt} that is in univalent relation with the
given goal (here \coqe{poly 50 >= 1000}), and inferring this
equivalent property through typeclass resolution using white-box
transport.
Then the equivalence induced by the univalent relation is used to
replace the original goal with the inferred property \coqe{opt}; this
is black-box transport.
The definition of the \coqe{replace_goal} tactic is simple in
\coqe{Ltac}:
\begin{shaded}
\begin{coq}
Ltac replace_goal :=
  match goal with | |- ?P => let X := fresh "X" in
    refine (let X := _ : { opt : Prop & P ≈ opt} in _);
      [ eexists; typeclasses eauto | apply (e_inv (equiv X.2))]
  end.
\end{coq}
\end{shaded}
It first introduces the definition of a property \coqe{opt} that is in
univalent relation with \coqe{P}. This property \coqe{opt} is obtained automatically by triggering the typeclass resolution on \coqe{P ≈ ?}, finding a canonical instance. If it succeeds, this step gives at the same time the definition of \coqe{opt} and the proof that it is in univalent relation with \coqe{P}. In particular, this proof contains an equivalence between \coqe{P} and
\coqe{opt}, which is used to replace the goal \coqe{P} by \coqe{opt}, using
the inverse function of the equivalence. 

\paragraph{Fixpoints.} 
The proof technique above also scales to fixpoints, even though fixpoints must be dealt with in a non-generic way. Concretely, one needs to provide a univalent
parametricity instance for each case of pattern matching performed
inside the fixpoint. In the sequence example of \S\ref{sec:auto-comp}, the required instance must be defined for
fixpoint matching on \coqe{0}, \coqe{1}, and \coqe{2}:  
\begin{shaded}
\begin{coq}
  Definition fix_nat_3 :
    (fun P X0 X1 X2 XS => fix f (n : nat) {struct n} : P := match n with
                                                                                           | 0 => X0
                                                                                           | 1 => X1
                                                                                           | 2 => X2
                                                                                           | S n => XS n (f n) end) ≈
    (fun P X0 X1 X2 XS => fix f (n : nat) {struct n} : P := match n with
                                                                                           | 0 => X0
                                                                                           | 1 => X1
                                                                                           | 2 => X2
                                                                                           | S n => XS n (f n) end).
\end{coq}
\end{shaded}
The proof of this instance is systematic, and can be done
automatically using \coqe{induction} and \coqe{typeclasses eauto}.
However, because pattern matchings are not first-class
objects in \Coq, it is not possible to define a single generic univalent parametric instance for every fixpoint.
Note that this (practical, rather than theoretical) issue does not
manifest when using eliminators, because there is only one eliminator
per inductive type.

\paragraph*{Limitations of the current \Coq implementation} 

The use of typeclasses and typeclass resolution to deal with the global
context of univalently related constants is both a blessing
and a curse.
It is nice because it allows us to instrument univalent parametricity in
\Coq without modifying the source code, offering great flexibility and accessibility.  
But this approach does not scale very well to large developments because
typeclass resolution is internally based on proof search, which
quickly becomes intractable. 
In practice, in our current implementation, we observe that successful typeclasss resolution is fairly fast, but when the proof
search fails because of some missing \coqe{Hint}s, resolution can take a
very long time or may even diverge. 

This issue is a known limitation of using typeclasses to drive automatic program transformations, and can also be experienced in other frameworks like CoqEAL~\cite{cohenAl:cpp2013}. 
It could be addressed via a direct implementation of univalent parametricity, for instance using MetaCoq~\cite{sozeau:hal-02167423,MetaCoq:popl2020}. With a MetaCoq plugin, it is possible to have
complete access to the reification of a term of \Coq in \Coq. 
This would provide complete programmatic control over the univalent
parametricity  translation, thereby avoiding issues that follow from
relying on proof search.
Another possibility is to implement the translation directly as a
\Coq plugin in \OCaml, as has recently been done for the white-box approach by \citet{ringer_et_al:LIPIcs:2019:11081}. However, the direct
definition of a plugin is very sensitive to changes in the
implementation of the \Coq proof assistant itself, so we believe that the
MetaCoq approach would be better suited, as it provides an abstraction barrier between the theory of \Coq and its actual implementation.

\section{Case Study: Native Integers}
\label{sec:ffi}

To further illustrate the applicability of univalent parametricity, we consider a case study based on a recent improvement to \Coq: native 63-bits integers, available starting with \Coq~8.10.\footnote{See \url{https://github.com/coq/coq/blob/v8.10/theories/Numbers/Cyclic/Int63/Int63.v}}
This extension raises the question of how to interface a native datatype within \Coq, supporting reasoning about (and with) such native values.

Native integers provide a basic datatype \coqe{int} together with
native functions such as the left lshift operator \coqe{a << b}, which
shifts each bit in \coqe{a} to the left by the number of
positions indicated by \coqe{b}. These are defined as follows:
\begin{shaded}
\begin{coq}
Register int : Set as int63_type.
Primitive lsl := #int63_lsl.
Infix "<<" := lsl (at level 30, no associativity) : int63_scope.
\end{coq}
\end{shaded}
Because the operations are native, there is no direct way to reason about them
in \Coq. This is why the standard library of \Coq relates \coqe{int}
to binary numbers \coqe{Z} (these are similar to \coqe{N}, but include
negative numbers), and states {\em axioms} to specify the behavior of
native functions.
\begin{shaded}
\begin{coq}
  Definition wB := 2 ^ 63.

  Definition to_Z : int -> Z := ... (* explicit definition using operations on int *)
  Definition of_Z : Z -> int := ... (* explicit definition using operations on int *)

  Axiom of_to_Z : forall (x:int), of_Z (to_Z x) = x.
  Axiom lsl_spec : forall x p, to_Z (x << p) = to_Z x * 2 ^ (to_Z p) mod wB.
\end{coq}
\end{shaded}
The statements of \coqe{of_to_Z} and \coqe{lsl_spec} are very natural,
but the first question it raises is about completeness: How can we be
sure that these two axioms are enough to prove any property on
\coqe{lsl}? For instance, do we also need to postulate that \coqe{to_Z} forms a
retraction?

Actually, it is possible to derive the other part of the
correspondence between \coqe{int} and \coqe{Z} (note that this is not
an axiom, it is proven by induction on \coqe{z} in \coqe{Z}):
\begin{shaded}
  \begin{coq}
    Lemma of_Z_spec : forall (z:Z), to_Z (of_Z z) = n mod wB.
\end{coq}
\end{shaded}

Considering \coqe{of_to_Z} and \coqe{of_Z_spec}, it would seem that \coqe{int} and \coqe{Z} are indeed univalently related and that functions on \coqe{int} can likewise be univalently related to functions on \coqe{Z}.
Actually, the careful reader might have noticed that \coqe{of_Z_spec}
does not exactly correspond to the statement of a retraction on
\coqe{to_Z}.
This is because \coqe{int} is actually in relation with
$\mathbb{Z}/2^{63}\mathbb{Z}$.
Therefore, we can define $\mathbb{Z}/2^{63}\mathbb{Z}$ as the type \coqe{ZwB}
of integers between $0$ and $2^{63}$, and adjust \coqe{to_Z} and
\coqe{of_Z} accordingly.
\begin{shaded}
\begin{coq}
  Definition ZwB := { n : Z & 0 <= n < wB }.

  Lemma to_Z_bounded : forall x, 0 <= to_Z x < wB.
  Definition to_ZwB : int63 -> ZwB := fun x => (to_Z x; to_Z_bounded x).
  Definition of_ZwB (z:ZwB) : int63 := of_Z z.1.
\end{coq}
\end{shaded}
The axioms \coqe{of_to_Z } and \coqe{lsl_spec} are exactly what is
required to show that \coqe{int} and \coqe{ZwB} are univalently
related and that the native function \coqe{lsl} is univalent
related to the corresponding function on \coqe{ZwB}.
\begin{shaded}
\begin{coq}
  Definition IsEquiv_to_Z_ : IsEquiv to_ZwB.
  Instance univrel_int_ZwB : int ⋈ ZwB.

  Definition ZwB_lsl : ZwB -> ZwB -> ZwB :=
    fun n m => ((n.1 * Z.pow 2 m.1) mod wB ; (* easy proof term omitted *)).

  Notation "n << m" := (ZwB_lsl n m) : ZwB_scope.

  Definition univrel_lsl : lsl ≈ ZwB_lsl.
\end{coq}
\end{shaded}
We have illustrated the correspondence between \coqe{int} and
\coqe{ZwB} using the \coqe{lsl} function, but the very same can be
done for all functions of the native integers interface.  

Once this is done, it is possible to use univalent parametricity to go
beyond what is currently provided in the \Coq standard library, such
as proving concrete properties on \coqe{ZwB} using an automatic
transport to \coqe{int}. Consider the following polynomial, and two similar proofs by computation: the first directly on \coqe{ZwB}, and the second on \coqe{int} after transport.
\begin{shaded}
\begin{coq}
  Definition poly_Z : ZwB -> ZwB :=  fun n => 45 + ZwB_pow n 100 - ZwB_pow n 99 * 16550.

  Goal poly_Z 16550 = 45.
    Time reflexivity. 
  Defined.

  Goal poly_Z 16550 = 45.
    replace_goal. Time reflexivity. 
  Defined.
\end{coq}
\end{shaded}
While both executions of \coqe{reflexivity} terminate, the execution
time when the goal is not shifted to \coqe{int} is two orders of magnitude slower than when it is ($0.3s$ vs. $0.002 s$).  
The difference for this precise (artificial) example may not seem
that significant in absolute terms, but we can expect it to be interesting in large-scale developments, which could justify the use of native integers.

The second---maybe more important---advantage of organizing all the
axioms on the specification of the functions on native integers using
univalent parametricity is that it guarantees {\em completeness} of
the axiomatization.
Indeed, by the White Box FP (Corollary~\ref{white box}), we are
certain that any theorem on \coqe{int} and its native functions is
univalently related to a theorem on \coqe{ZwB}.
And by the Black Box FP (Proposition~\ref{black box}), such
univalent relation allows us to easily transport the proof of this theorem on
\coqe{ZwB} to a proof of the corresponding theorem on \coqe{int}.
For instance, the distributivity of \coqe{<<} over addition can be automatically transported from \coqe{ZwB} to \coqe{int}:
\begin{shaded}
  \begin{coq}
  Definition ZwB_lsl_add_distr x y n : (x + y) << n = (x << n) + (y << n).
  (* proof using properties of mod and automation on Z *)

  Definition lsl_add_distr : forall x y n, (x + y) << n = (x << n) + (y << n) :=
    upaBB ZwB_lsl_add_distr.
\end{coq}
\end{shaded}
In contrast, the proof of \coqe{lsl_add_distr} in the \Coq
standard library is done manually, and in fact does not even use the auxiliary lemma
\coqe{ZwB_lsl_add_distr}. Instead, the proof is dealing with both the conversion to \coqe{Z} and the proof of the property on \coqe{Z} at the same time.
We believe that systematically proving properties first on \coqe{ZwB}, and
then automatically transporting them to \coqe{int} simplifies development, maintenance, and understanding.

\section{Related Work}
\label{sec:related}

\paragraph{Type theories.}
Homotopy Type Theory \cite{hottbook} treats equality of types as
equivalence. For regular datatypes (also known as homotopy sets or
hSets), equivalence boils down to isomorphism, hence the existence of
transports between the types. However, as univalence is considered as an
axiom, any meaningful use of the equality type to transport terms along
equivalences results in the use of a non-computational construction. In
contrast, in this work we carefully delimit the effective equivalence-preserving type
constructors in our setting, pushing axioms as far as possible, and supporting specialized proofs to avoid them in certain scenarios.

Cubical Type Theory~\cite{cubicaltt} provides computational content to
the univalence axiom, and hence functional and propositional
extensionality as well. In this case, the invariance of constructions
by type equivalence is built in the system and the equality type
reflects it.  Note that the recent work of
\citet{altenkirch15:towards} on a cubical type theory without an
interval proposes to use a relation quite close to the one defined in
univalent parametricity to encode equality in the theory. They are
handling a different problem (albeit in a similar way),
because they are trying to build a theory that supports univalence.
In our framework, we relate the relation to
equality and type equivalence, which allows us to stay within \CIC,
without relying on another more complex type theory, but the price we pay is to assume univalence as an axiom.
Note that while we focus on \CIC extended with the univalence axiom, univalent parametricity could be likewise developed in a cubical theory, such as Cubical Agda~\cite{vezzosiAl:icfp2019}. The only change is that the definition of univalent relations for types and functions can make use of top-level equality directly instead of using explicit extensional definitions such as type equivalence and pointwise equality. In particular, terms such as $\equivPi$ or $\univTerm{\Pi}$ (\S\ref{sec:univparam}) can be largely simplified. But the interest of the general relational univalent parametric setting remains unchanged because it provides heterogeneous automatic transport, which is not readily available in cubical type theories.

Observational Type Theory (OTT)~\cite{altenkirchAl:plpv2007} uses a
different notion of equality, coined John Major equality. It is a
heterogeneous relation, where terms in potentially
different types can be compared, 
usually with the assumption that the two types will
eventually be \emph{structurally} equal, not merely equivalent. This
stronger notion of equality of types is baked in the type system, where
type equality is defined by recursion on the structure of types, and value
equality follows from it. It implies the K axiom, which is in general
inconsistent with univalence, although certainly provable for all the
non-polymorphic types definable in OTT. A system similar to ours could
be defined on top of OTT to allow transporting by equivalences.

Parametric Type Theory and the line of work integrating parametricity
theory to dependent type theory, either internally
\cite{bernardyAl:entcs2015} or externally, is linked to the current work
in the sense that our univalent parametricity translation is a
refinement of the usual parametricity translation. 
In its simple form, parametricity in type theory does not admit an
identity extension lemma, which ensures that if we pass the identity
type as relations for the arguments of type constructors, then the
resulting relation for the type constructor is equivalent to the
identity.
This issue has been addressed, first on small
types~\cite{10.1145/2535838.2535852} by considering the reflexive
graph model and then on an extension of type theory with a parametric
function type~\cite{10.1145/3110276}.
In our work, we get a variant of the identity extension lemma very
easily because all the relations we consider need to be related to
equality through the coherence condition.
However, we do not exactly get identity extension: for instance
on the type of booleans \coqe{bool}, it is also possible 
to provide the relation which says that \coqe{true} is related to \coqe{false} (and dually), together with the equivalence which flips booleans. This defines a
univalent relation on \coqe{bool} because it satisfies the coherence
condition, but the relation does not coincide with equality on
booleans---it only does up to flipping booleans.

Recently, \citet{cavalloHarper:csl2020} proposed a theory
mixing both cubical type theory and parametric type theory. However,
they are not considering univalent parametricity, but rather a theory
where proofs of parametricity can make use of univalent principles, such
as functional extensionality or the univalence axiom. 

For Extensional Type Theory, \citet{krishnaswamiDreyer:csl2013} develop
an alternative view on parametricity, more in the style of Reynolds, by
giving a parametric model of the theory using quasi-PERs and a
realizability interpretation of the theory. From this model construction
and proof of the fundamental lemma they can justify adding axioms to the
theory that witness strong parametricity results, even on open
terms. However they lose the computability and effectiveness enjoyed by both
Bernardy's construction and ours, which are developed in intensional type theories.

The parametricity translation of \citet{DBLP:journals/corr/AnandM17} extends the logical relation on
propositions to force that related propositions are {\em logically}
equivalent. It can be seen as a degenerate case of our approach that
forces related types to be equivalent, considering that equivalence boils down to logical equivalence on propositions (see \S\ref{sec:prop} for a more detailed explanation).
However the translations differ in other aspects. While our
translation requires the univalence axiom, theirs assumes proof
irrelevance and the K axiom, and does not treat the type
hierarchy. Our solution to the fixpoint arising from interpreting
$\Type_i : \Type_{i+1}$ is original, along with the use of conditions
to ensure coherence with equality. They study the translation of
inductively-defined types and propositions in detail, giving specific
translations in these two cases to accommodate the elimination
restrictions on propositions, and are more fine-grained in the
assumptions necessary on relations in parametricity theorems. In both
cases, the constructions were analyzed to ensure that axioms were only
used in the non-computational parts of the translation, hence they are
effective.

\newcommand{\cat}{\mathcal{C}}
\newcommand{\Span}{\mathbf{Span}}
\newcommand{\two}{{\cdot -> \cdot}}

Relational parametricity in type theory can be understood
categorically in terms of inverse diagrams. Indeed, given a category
with attributes (CwA) $\cat$, the category of inverse diagrams
$\cat^\Span$ (where $\Span$ is the “walking Span” category :
$0 <- 01 -> 1$) is also a CwA where a type is now a triple of two types
and a relation between them. This interpretation does not extend
directly to univalent parametricity but the notion of homotopical
inverse diagrams recently developed by \citet{kapulkin2018homotopical}
seems to provide a categorical interpretation of univalent
parametricity---namely, as inverse diagrams over the category $\two$
with two objects and one arrow between them which is ``marked'' as an
equivalence. \citet{kapulkin2018homotopical} 
show that under the assumption that the CwA $\cat$ satisfies
functional extensionality, the category $\cat^\two$ is also a CwA.
However, they do not investigate the interpretation of universes and
univalent universes as done for instance by \citet{shulman2015}. Our
work suggests that the interpretation of universes for homotopical
inverse diagrams can only be done for univalent universes, in the same
way as dependent products can only be lifted in presence of functional
extensionality.

\paragraph{Data refinement.}

Another part of the literature deals with the general data refinement
problem, e.g. the ability to use different related data structures
for different purposes: typically simplicity of proofs versus efficient
computation. The frameworks provide means to systematically transport
results from one type to the other.

\citet{magaudBertot:types2000} and \citet{magaud:tphol2003} first explored the idea
of transporting proof terms from one data representation to another in
Coq, assuming that the user provides a translation of the definitions from one
datatype to the other. It is limited to isomorphism and implemented
externally as a plugin. The technique is rather invasive in the sense
that it supports the transport of proof terms that use the
computational content of the first type (\eg~the reduction rules for
\coqe{plus} on natural numbers) by making type conversions explicit,
turning them into propositional rewrite rules. This approach breaks down in
presence of type dependencies.

In CoqEAL~\cite{cohenAl:cpp2013} refinement is allowed from
proof-oriented data types to efficiency-oriented ones, relying on
generic programming for the computational part and automating the
transport of theorems and proofs. CoqEAL does not only deal with isomorphisms,
but also quotients, and even partial quotients, which we cannot
handle. The approach exploits parametricity for generating proofs, but it does not support general dependent types, only parametric polymorphism. 
Moreover, the advocated style prevents doing local
transport and rather requires working with interfaces from the outset---which we coined the {\em anticipation problem}---and applying
parametricity in a second step. We can avoid anticipating common interfaces thanks to our limitation to transport by equivalences.

In a categorical setting, \citet{ROBINSON1994163} uses parametricity
for System F to show that when a function is defined on a type $A$
using only the fact that it is isomorphic to a given type $B$, then
the function can be transported to another type $A'$ as long as it is also
isomorphic to $B$. This approach to parametricity is similar to that
of CoqEAL, which uses the fact that when working with an abstract
interface---here an abstract copy of a type---the term does not depend on
the implementation of the interface. This is different from the univalent parametricity approach we develop here, which works on two concrete types and functions defined on them, which can {\em a posteriori} be shown equivalent.

\citet{haftmannAl:itp2013} explain how the Isabelle/HOL code generator
uses data refinements to generate executable versions of abstract
programs. The refinement relation used is similar to the partial
quotients of CoqEAL.
The Autoref tool for Isabelle \cite{lammich:itp2013} also uses
parametricity for refinement-based development. It is an external tool
that synthesizes executable instances of generic algorithms and refinement
proofs.

\citet{huffmanKuncar:cpp2013} address the problem of transferring
propositions between different types, typically a representation type
(\eg~integers) to an abstract type (\eg~natural numbers) in the context
of Isabelle/HOL. Again this allows to relate a type and its quotient,
like in CoqEAL, and is based on parametricity.
Recently, \citet{zimmermannHerbelin:arxiv2015} present an algorithm and plugin to transport theorems along isomorphisms in Coq similar to that of \citet{huffmanKuncar:cpp2013}. 
In addition to requiring the user to
provide a surjective function $f$ to relate two data types, their technique demands that the user explicitly provide transfer lemmas of the form $\forall x_1 \ldots x_n,\ R(x_1\ldots x_n) \implies R'(f(x_1)\ldots f(x_n))$, for each relation $R$ that the user expects to transfer to a relation $R'$. The approach is not yet able to handle parameterized types, let alone dependent types and type-level computation.

Recently, \citet{ringer_et_al:LIPIcs:2019:11081} developed a tool in
\Coq to automatically build equivalences between inductive types using
the theory of ornaments~\cite{dagand_mcbride_2014}. These equivalences
are then instrumented to transport functions and proofs using a framework
that is largely inspired by our previous conference article on
univalent parametricity~\cite{tabareauAl:icfp2018}.  However, the approach is implemented as a \Coq plugin in \OCaml. While this can be convenient to achieve full automation without relying on the somewhat brittle typeclass mechanism, it also presents major risks by being tied to a specific implementation of \Coq. A direct implementation based on MetaCoq~\cite{sozeau:hal-02167423,MetaCoq:popl2020} would seem preferable. Independently of the implementation strategy, the tool developed by 
\citet{ringer_et_al:LIPIcs:2019:11081} is essentially the white-box transport described in this article. Because they never perform black-box (univalent) transport, however, they also encountered the computation problem of parametricity described in \S~\ref{sec:het-param}. We hope that the many clarifications provided in this extended and revised article will prove helpful for addressing these issues.

\section{Conclusion}
\label{sec:conclu}

We have presented {\em univalent parametricity}, a fruitful marriage of parametricity and univalence that fully realizes programming and proving modulo equivalences in type-theoretic proof assistants. 
Univalent parametricity supports two complementary reasoning principles and forms of transport, resulting in {\em transport {\`a} la carte}: from a type equivalence, univalent transport operates in a black-box manner; additional proofs of equivalences between functions over related types allow heterogeneous parametricity to transport terms in a white-box manner up to these equivalences.
We have shown how this makes it possible to conveniently switch between an easy-to-reason-about representation and a computationally-efficient representation. 
Our approach is realizable even in type theories where univalence is taken as an axiom, such as in \Coq, and we have explored how to maximize the effectiveness of transport in this setting. Several examples and use cases in \Coq attest to the practical impact of this work to provide easier-to-use proof assistants by supporting seamless programming and proving modulo equivalences.

\begin{acks}
We thank the reviewers of the original ICFP 2018 publication, as well as the ICFP 2018 participants for asking several questions that forced us to better understand the limits of our initial proposal. This eventually led us to the enhanced presentation of this article. We are thankful as well to Assia Mahboubi for suggesting concrete applications for validating the use of our approach, which likewise pushed us to extend and enhance this work on univalent parametricity. Finally, we are extremely grateful to the anonymous JACM reviewers for their acute and eminently helpful comments and suggestions, which have resulted in considerable clarifications and more precise expositions of the concepts. 
\end{acks}

\bibliography{biblio}

\end{document}